\newcommand{\overbar}[1]{\mkern 2mu\overline{\mkern-2mu#1\mkern-2mu}\mkern 2mu}
\newtheorem{proposition}{Proposition}
\newtheorem{definition}{Definition}
\newcommand{\mb}[1]{\mathbf{#1}}
\newcommand{\mc}[1]{\mathcal{#1}}
\newcommand{\mbb}[1]{\mathbb{#1}}
\newcommand{\wt}[1]{\widetilde{#1}}
\newcommand{\wh}[1]{\widehat{#1}}
\newcommand{\ob}[1]{\overbar{#1}}
\begin{document}

 \title{\huge WMMSE-Based Joint Transceiver Design for Multi-RIS Assisted Cell-free Networks Using Hybrid CSI}

\author{Xuesong Pan, Zhong Zheng,~\IEEEmembership{Member, IEEE}, Xueqing Huang,~\IEEEmembership{Member, IEEE}, Zesong Fei,~\IEEEmembership{Senior Member, IEEE}
\thanks{X. Pan, Z. Zheng and Z. Fei are with the School of Information and Electronics, Beijing Institute of Technology, Beijing 100081, China (e-mail: \{xs.pan, zhong.zheng, feizesong\}@bit.edu.cn).}
\thanks{X. Huang is with the Department of Computer Science, New York Institute of Technology, NY 11568, USA (e-mail: xhuang25@nyit.edu).}}


\maketitle

\begin{abstract}
	In this paper, we consider cell-free communication systems with several access points (APs) serving terrestrial users (UEs) simultaneously. To enhance the uplink multi-user multiple-input multiple-output communications, we adopt a hybrid-CSI-based two-layer distributed multi-user detection scheme comprising the local minimum mean-squared error (MMSE) detection at APs and the one-shot weighted combining at the central processing unit (CPU). Furthermore, to improve the propagation environment, we introduce multiple reconfigurable intelligent surfaces (RISs) to assist the transmissions from UEs to APs. Aiming to maximize the weighted sum rate, we formulate the weighted sum-MMSE (WMMSE) problem, where the UEs’ beamforming matrices, the CPU’s weighted combining matrix, and the RISs’ phase-shifting matrices are alternately optimized. Considering the limited fronthaul capacity constraint in cell-free networks, we resort to the operator-valued free probability theory to derive the asymptotic alternating optimization (AO) algorithm to solve the WMMSE problem, which only depends on long-term channel statistics and thus reduces the interaction overhead. Numerical results demonstrate that the asymptotic AO algorithm can achieve a high communication rate as well as reduce the interaction overhead.
\end{abstract}

\begin{IEEEkeywords}
Cell-free networks, reconfigurable intelligent surface, Rician channel, operator-valued free probability. 
\end{IEEEkeywords}

\section{Introduction}
Cell-free (CF) networks are expected to exhibit promising prospects and widespread applications in mobile communications~\cite{ngo2017cell}. In CF networks, numerous access points (APs) collaboratively serve multiple user equipments (UEs) without the constraints of traditional cell boundaries. Each AP is connected to a central processing unit (CPU) via a fronthaul link, enabling coordinated operation and resource management~\cite{demir2021foundations}. Due to physical constraints, a CF network may include multiple CPUs, each managing the coordination of a subset of APs~\cite{bjornson2020scalable}. Additionally, these CPUs are connected to the core network via backhaul links to facilitate data exchange with the Internet.

With uniform deployment and cooperation of APs, the CF network brings three major advantages as compared to the traditional cellular network: higher signal-to-noise ratio (SNR) with smaller variations, better ability to manage interference, and coherent transmission capabilities that further enhance SNR~\cite{demir2021foundations}. It is worth mentioning that the fronthaul links play a crucial role in performance enhancement, as it is responsible for the interaction of channel state information (CSI) and signals between APs and CPU. However, as CF network scales up, the frequent interaction of the vast instantaneous CSI data between APs and CPU leads to inevitable communications overhead and unpredictable delay, which prevents such architecture from realistic deployment.

To solve this issue, one promising approach is to replace the instantaneous CSI with statistical CSI, where the latter changes about tens of times slower~\cite{huang2011mimo}, thus significantly reducing the interaction overhead between APs and CPU. For instance, a large-scale fading decoding (LSFD) scheme was proposed in~\cite{adhikary2017uplink}, where the joint detection is performed to overcome the inter-cell interference and pilot contamination based on the large-scale fading coefficients. Specifically, in the LSFD scheme, APs first detect the received signals using the local CSI and then forward the detected signals to the CPU. After receiving each AP's locally detected data, CPU will perform joint detection trough linear processing based on the statistical CSI~\cite{bjornson2019making}. The using of local instantaneous CSI at APs and statistical CSI at CPU can not only leverage the macro-diversity gain, but also reduce the interaction overhead.

In the majority of existing works on CF networks, the Rayleigh channel model is generally adopted, which ignores the Line-of-Sight (LoS) propagation between APs and UEs~\cite{bjornson2019making}. This is a reasonable assumption in dense urban areas, where the direct links between APs and UEs may be blocked by obstacles or the antennas of APs are surrounded by many scatters~\cite{bjornson2017massive}. However, the LoS links may exist with high probability in the sub-6 GHz band~\cite{zhi2021statistical} and the LoS propagation becomes the dominant component of channels given the short distances between APs and UEs in CF networks~\cite{wang2024optimal}. Therefore, it's meaningful to consider the Rician channel model to capture the practical channel propagation. 

Due to the absorption by building materials, the multiple-input multiple-output (MIMO) channels typically have only a few LoS propagation paths and limited Non-LoS (NLoS) paths, which causes rank deficiency in the MIMO channel matrix that significantly degrades the channel capacity~\cite{pala2022joint}. To enhance the communication performance, we introduce the multiple reconfigurable intelligent surfaces (RISs) deployment in CF networks. In general, an RIS is a planar surface consisting of a large number of adaptively configurable passive reflecting elements~\cite{Liu2022Dynamic}. Through the nearly passive and tunable signal transformations, i.e., passive beamforming, RISs can realize programmable and reconfigurable wireless propagation environments~\cite{Wang2023Joint}.

The deployment of RISs exhibits benefits in both the NLoS-dominated and the LoS-dominated environments. For the Non-LoS propagation dominated environment, where the LoS connections are blocked, the deployment of RISs can reflect signals and extend the coverage of APs~\cite{wang2022transmit}. For the LoS propagation dominated environment, while other scattered/reflected components are severally attenuated by the building or the terrain, the deployment of RISs can increase the number of independent propagation paths, thereby improving the richness of the end-to-end channel and enhancing the channel capacity~\cite{pala2022joint}.

\subsection{Related Works}
For the NLoS-dominated environment, the RISs are deployed to extend the coverage of APs. The Rayleigh channel model was adopted in \cite{shi2022spatially}, where the uplink throughput of RIS-assisted CF networks was studied. In~\cite{jin2022ris}, the minimum achievable rate was maximized by optimizing the beamforming at APs and phase shifts at RISs, where direct links between APs and UEs are totally blocked by obstacles and RISs are deployed to create cascaded propagation paths and extend the coverage of APs. 
For the LoS-dominated environment, the RISs are deployed to improve the wireless propagation environment, since the few dominating propagation paths will cause rank deficiency in the channel matrix and significantly degrade the MIMO channel capacity~\cite{shin2003capacity}. In~\cite{ma2023cooperative}, the passive and active beamforming at APs and RISs were designed to maximize the weighted sum rate, where the Rician model is adopted in each link between APs and UEs. In~\cite{nguyen2022spectral}, the authors derived the closed-form expressions for downlink and uplink spectral efficiency (SE) of the RIS-assisted CF networks and concluded that the introduction of RISs can enhance the system performance, where the correlated Rician model is adopted. The phase shifts of RISs and beamforming of APs were optimized to enhance the uplink multiple-user MIMO (MU-MIMO) capacity in~\cite{wang2024towards}, where the RISs are deployed to improve the rank of channels between APs and UEs. Therefore, the deployment of RISs can achieve significant performance gains in both scenarios of CF networks. However, the instantaneous CSI is utilized in the design of phase shifts, which will increase the computational complexity and interaction overhead due to the channel estimation and frequent update of phase shifts~\cite{kammoun2020asymptotic}.

On the other hand, considering the limited fronthaul capacity, the LSFD scheme is investigated in CF networks to reduce the interaction overhead~\cite{nayebi2016performance, zhang2021local,mai2019uplink, mai2020downlink, wang2022uplink, wang2023uplink}. The authors in \cite{nayebi2016performance} investigated the uplink performance of CF networks, where the single-antenna APs first detect UEs' signals using the match filter (MF) and the CPU conducts LSFD to obtain the final detection. To further reduce interference, the local zero-forcing (ZF) detector with the LSFD receiver scheme in the Rayleigh fading channels was studied in~\cite{zhang2021local}. In~\cite{bjornson2019making}, the authors considered the local minimum mean-squared error (MMSE) detector in APs and the LSFD receiver at the CPU, which tradeoffs between signal enhancement and interference suppression. Note that the above works all assume that the UEs are equipped with single antenna. Since the contemporary UEs have been equipped with multiple antennas for additional spatial degree of freedom exploitation~\cite{pan2024uplink}. The authors investigated the uplink and downlink performance of CF networks with multi-antenna UEs in the uncorrelated Rayleigh fading channels in~\cite{mai2019uplink, mai2020downlink}. Inspired by~\cite{bjornson2019making}, the authors studied the uplink performance of local MMSE detector with LSFD receiver scheme in CF networks with multi-antenna UEs, and then investigated the precoding design of multi-antenna UEs in the correlated Rayleigh fading channels in~\cite{wang2022uplink,wang2023uplink}. However, the existing works widely adopt the NLoS radio propagation between APs and UEs in CF networks. The inclusion of LoS component will introduce significant challenges and different difficulties in the optimization of system performance~\cite{liu2023cell}.

\subsection{Contributions}
To address these drawbacks in existing works, we investigate the uplink MU-MIMO detection in multi-RIS-assisted CF networks, where a two-layer signal detection scheme is adopted to jointly recover the multi-antenna UEs’ signals based on hybrid CSI, i.e. the instantaneous CSI and statistical CSI. In particular, the APs first detect the UEs’ signals by using the MMSE detector based on the local instantaneous CSI. Then, the CPU collects the detected signals from all APs and performs weighted combining based on the long-term statistical CSI. The weighted combining matrix (WCM) of the CPU, the transmit precoding matrices (TPMs) of the UEs, and the phase-shifting matrices (PSMs) of the RIS panels are jointly optimized to maximize the weighted sum rate. In the proposed system architecture, the main contributions of this paper are summarized as follows:
\begin{itemize}
\item The general Rician MIMO model with Weichselberger’s correlation structure is adopted for the channel between any two nodes. The channel model fits a wide range of realistic MIMO channels that have LoS propagation components and arbitrary scattering distribution, i.e., a general framework that is applicable to both NLoS-dominated and LoS-dominated environments. Therefore, this channel model can better capture the characteristics of the propagation links in CF networks than the ones in existing related works.
\item Following the weighted sum-MMSE (WMMSE) framework in~\cite{shi2011iteratively}, the formulated original non-convex weighted sum-rate maximization is converted into a WMMSE problem, which can be further decomposed into a sequence of simpler sub-problems that alternately optimize WCM, TPMs and PSMs. Therefore, these optimizing variables are decoupled and can be solved by the MMSE algorithm, the Lagrange multipliers method and the Riemannian manifold based gradient descent algorithm, respectively. Numerical results demonstrate the fast convergence and the performance gain of the overall iterative WMMSE algorithm.
\item Taking the limited fronthual capacity into account, we resort to the operator-valued free probability theory to further reformulate the decomposed sub-problems and then propose the asymptotic alternating optimization (AO) algorithm to obtain the optimal transceivers and phase shifts, which only depend on the statistical CSI and avoid time-consuming Monte Carlo simulations. The proposed algorithm can effectively reduce the interaction overhead and lay the foundation for the practical deployment of multi-RIS-assisted CF networks.

\end{itemize}

The rest of this article is organized as follows. The system model and the two-layer signal detection architecture are introduced in Section~\ref{sec_sysmodel}. The AO algorithm and asymptotic AO algorithm for the WCM, TPMs and PSMs is respectively proposed in Section~\ref{sec_AO} and Section~\ref{sec_asymAO}. Numerical results are given in Section~\ref{sec_Result}. Section~\ref{sec_conclusion} concludes the main findings of this article.

\emph{Notations.} Throughout the paper, vectors are represented by lower-case bold-face letters, and matrices are represented by upper-case bold-face letters. The superscripts $(\cdot)^*$, $(\cdot)^\text{T}$ and $(\cdot)^\dag$ denote the conjugate, transpose and conjugate transpose operations, respectively. We use $\mb{0}_n$ and $\mb{I}_n$ to represent an $n\times n$ all-zero matrix and an $n\times n$ identity matrix, respectively. The notation ${\rm{blkdiag}}(\mb{A}_1,\dots,\mb{A}_n)$ denotes the diagonal block matrix consisting of square matrices $\mb{A}_1,\dots,\mb{A}_n$. The notation $[\mb{A}]_{mn}$ denotes the $(m,n)$-th entry of matrix $\mb A$. The notations $\otimes$ and $\odot$ are denoted as Kronecker product and Hadamard product, respectively. For symmetric matrices $\mb{A}$ and $\mb{B}$, $\mb{A}\succeq\mb{B}$ signifies that $\mb{A}-\mb{B}$ is positive semidefinite. The operator $\to_{a.s.}$ denotes \textit{almost surely converge to}.

\section{System Model and Uplink Signal Detection Architecture} \label{sec_sysmodel}
We consider a multi-RIS-assisted CF network comprising $L$ APs, $N$ UEs and $K$ RISs, as illustrated in Fig.~\ref{figModel}. Without loss of generality, each AP connects to one CPU via fronthaul, which is used to share CSI and signals with the CPU and facilitate the phase-synchronization with the other APs~\cite{demir2021foundations}. The RISs are controlled by the CPU via a controller, which is used to exchange information and achieve phase synchronization with APs and the CPU. The signal model and channel model are given as follows.
\subsection{Signal Model}
 In this system, each AP and UE are equipped with $R$ receiving antennas and $T$ transmitting antennas, respectively. Each RIS panel is equipped with $L_k,1\leq k\leq K$ passive reflecting elements. We define the total numbers of transmitting antennas and passive reflecting elements as $T_t=N\times T$ and $L_R=\sum_{k=1}^{K}L_k$, respectively. In addition, we define $L_{AR}=R+L_R$.

Let $\mb{F}_{0,nl}\in \mbb{C}^{R\times T}$, $\mb{F}_{nk}\in \mbb{C}^{L_k\times T}$ and $\mb{G}_{kl}\in\mbb{C}^{R\times L_k}$ denote the direct channel from the $n$-th UE to the $l$-th AP, the $n$-th UE to the $k$-th RIS and the $k$-th RIS to the $l$-th AP, respectively. Denote $\mb{x}_n\in\mbb{C}^{T}$ as the transmitting signal of the $n$-th UE, where $\mb{x}_n\in\mc{CN}(\mb{0},\mb{I}_{T})$, and $\mb{n}_l\in \mbb{C}^{R}$ as the additive white Gaussian noise at the $l$-th AP, where $\mb{n}_l\in \mc{CN}(\mb{0},\sigma^2\mb{I}_{R})$ with $\sigma^2$ being the noise power. The received signal $\mb{y}_l\in\mbb{C}^{R}$ at the $l$-th AP is expressed as
\begin{equation} \label{equ_Signalmodel}
	\begin{aligned}
		\mb{y}_l = \sum_{n=1}^{N}\left(\mb{F}_{0,nl} + \sum_{k=1}^{K}\mb{G}_{kl}\mb{\Theta}_{k}\mb{F}_{nk}\right)\mb{W}_{n}\mb{x}_{n}+\mb{n}_{l},
	\end{aligned}
\end{equation}
where $\mb{\Theta}_{k}=\operatorname{diag}(e^{j\theta_{k,1}},\cdots,e^{j\theta_{k,L_{k}}})$ denotes the PSM of the $k$-th RIS, where $0\leq \theta_{k,l_k}\leq 2\pi$ for $l_k = 1,\dots,L_k$, and $\mb{W}_{n}\in\mbb{C}^{T\times T}$ denotes the TPM of the $n$-th UE, which satisfies the power constraint as $\mathrm{Tr}(\mb{W}_{n}\mb{W}_{n}^\dag)\leq p_n$ with $p_n$ being the maximum transmitting power for the $n$-th UE.

Let $\mb{F}_l=[\mb{F}_{0l}^\dag,\mb{F}_1^\dag,\dots,\mb{F}_K^\dag]^\dag\in \mbb{C}^{L_{AR}\times T_t}$ with $\mb{F}_{0l}=[\mb{F}_{0,1l},\dots,\mb{F}_{0,Nl}]\in\mbb{C}^{R\times T_t}$ and $\mb{F}_k=[\mb{F}_{1k},\dots,\mb{F}_{Nk}]\in\mbb{C}^{L_k\times T_t}$. Also, let $\mb{G}_l=[\mb{I}_{R},\mb{G}_{1l},\dots,\mb{G}_{Kl}]\in\mbb{C}^{R\times L_{AR}}$ and $\mb{\wh{\Theta}} = \mathrm{blkdiag}\left(\mb{I}_{R}, \mb{\Theta}_{1},\dots,\mb{\Theta}_K\right)\in\mbb{C}^{L_{AR}\times L_{AR}}$. We define $\mb{H}_l=\mb{G}_l\mb{\wh{\Theta}}\mb{F}_l\in\mbb{C}^{R\times T_t}$ and $\mb{H}_{nl}=\mb{G}_l\mb{\wh{\Theta}}\mb{F}_{nl}\in\mbb{C}^{R\times T}$ with $\mb{F}_{nl}=[\mb{F}_{0,nl}^\dagger,\mb{F}_{n1}^\dagger,\dots,\mb{F}_{nK}^\dagger]^\dagger\in\mbb{C}^{L_{AR}\times T}$. Then, the received signal (\ref{equ_Signalmodel}) can be rewritten as 
\begin{equation}
	\begin{aligned}
		\mb{y}_l = \mb{H}_l\mb{\wh {W}}\mb{x}+\mb{n}_l= \sum_{n=1}^{N}\mb{H}_{nl}\mb{W}_{n}\mb{x}_{n}+\mb{n}_{l},
	\end{aligned}
\end{equation}
where $\mb{\wh {W}}=\operatorname{blkdiag}\left(\mb{W}_{1},\dots,\mb{W}_{N}\right)\in\mbb{C}^{T_t\times T_t}$ and $\mb{x}=[\mb{x}_{1}^\dag,\dots,\mb{x}_N^\dag]^\dag\in \mbb{C}^{T_t}$. 
\begin{figure}[tb] 
	\centerline{\includegraphics[width=0.9\columnwidth]{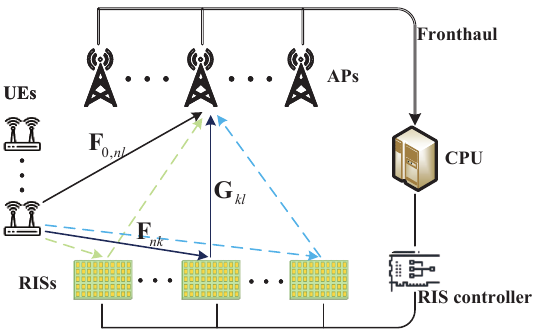}}
	\caption{RIS-assisted CF networks with capacity-limited fronthaul links.}
	\label{figModel}
\end{figure}

\subsection{Channel Model}
We adopt the non-central Weichselberger's MIMO channel model~\cite{weichselberger2006stochastic} for each component channel in (\ref{equ_Signalmodel}), such that
\begin{align}
	\mb{F}_{0,nl} & = \overline{\mb{F}}_{0,nl} + \widetilde{\mb{F}}_{0,nl} \notag \\
	&= \overline{\mb{F}}_{0,nl} + \mb{T}_{0,nl} (\mb{M}_{0,nl} \odot \mb{X}_{0,nl}) \mb{P}_{0,nl}^{\dagger}, \label{equ_F0}\\
	\mb{F}_{nk} & = \overline{\mb{F}}_{nk} + \widetilde{\mb{F}}_{nk} = \overline{\mb{F}}_{nk} + \mb{T}_{nk} (\mb{M}_{nk} \odot \mb{X}_{nk}) \mb{P}_{nk}^{\dagger},\label{equ_F}\\
	\mb{G}_{kl} &= \overline{\mb{G}}_{kl} + \widetilde{\mb{G}}_{kl} = \overline{\mb{G}}_{kl} + \mb{R}_{kl} (\mb{N}_{kl}\odot\mb{Y}_{kl}) \mb{C}_{kl}^{\dagger}, \label{equ_G}
\end{align}
where the deterministic matrices $\overline{\mb{F}}_{0,nl}$, $\overline{\mb{F}}_{nk}$ and $\overline{\mb{G}}_{kl}$ denote the LoS component of $\mb{F}_{0,nl}$, $\mb{F}_{nk}$ and $\mb{G}_{kl}$, respectively. The random scattering components are captured by $\widetilde{\mb{F}}_{0,nl}$, $\widetilde{\mb{F}}_{nk}$ and $\widetilde{\mb{G}}_{kl}$, where $\mb{T}_{0,nl}$, $\mb{P}_{0,nl}$, $\mb{T}_{nk}$, $\mb{P}_{nk}$, $\mb{R}_{kl}$ and $\mb{C}_{kl}$ are deterministic spatial correlation matrices and the deterministic matrices $\mb{M}_{0,nl}$, $\mb{M}_{nk}$ and $\mb{N}_{kl}$ with nonnegative entries represent the variance profiles of $\widetilde{\mb{F}}_{0,nl}$, $\widetilde{\mb{F}}_{nk}$ and $\widetilde{\mb{G}}_{kl}$, respectively. The random matrices $\mb{X}_{0,nl}\in\mbb{C}^{R\times T}$, $\mb{X}_{nk}\in\mbb{C}^{L_k\times T}$ and $\mb{Y}_{kl}\in\mbb{C}^{R\times L_k}$ are complex Gaussian distributed with zero-mean and independent entries, i.e., $[\mb{X}_{0,nl}]_{ij}\in\mc{CN}(0,1/T)$, $[\mb{X}_{nk}]_{ij}\in\mc{CN}(0,1/T)$ and $[\mb{Y}_{kl}]_{ij}\in\mc{CN}(0,1/L_k)$. In addition, we assume that the frequency-flat fading channels of different links are independent, since the channels $\{\mb{F}_{0,nl}\}_{1\leq n \leq N, 1\leq l\leq L}$, $\{\mb{F}_{nk}\}_{1\leq n \leq N, 1\le k\le K}$ and $\{\mb{G}_{kl}\}_{1\le k\le K, 1\leq l\leq L}$ are spatially separated.

\subsection{Two-layer Signal Detection Scheme} \label{subsec_Oneshot}
In this subsection, we propose the distributed MMSE detection with one-shot combining scheme, including the local MMSE detection in APs and one-shot weighted combining at the CPU. Specifically, in the first layer, each AP individually recovers UEs' signals by using the MMSE detector based on the local instantaneous CSI. By minimizing the MSE between the locally recovered signal $\mb{\wh{x}}_{nl}=\mb{U}_{nl}^\dag\mb{y}_l$ and the original signal $\mb{{x}}_{nl}$, i.e., $\mbb{E}\lVert\mb{{x}}_{n}-\mb{\wh{x}}_{nl}\rVert^2$, the MMSE detector $\mb{U}_{nl}\in\mbb{C}^{R\times T}$ can be expressed as 
\begin{equation} \label{equ_U}
	\mb{U}_{nl}=(\mb{H}_l\mb{\wh {W}}\mb{\wh {W}}^\dag\mb{H}_l^\dag+\sigma^2\mb{I}_{R})^{-1}\mb{H}_{nl}\mb{W}_{n}.
\end{equation}

The MMSE detector only depends on the local instantaneous CSI and does not require the information interaction between APs. To exploit the macro-diversity gain and achieve joint detection, after the local detection, each AP transmits the locally detected signals to the CPU for further refinement.

In the second layer, after collecting the locally recovered signals from every AP, the CPU then combines the signals as
\begin{equation}
	\mb{\wh{x}}_{n} = \sum_{l=1}^{L}\mb{A}_{nl}^\dag\mb{\wh{x}}_{nl},
\end{equation}
where $\mb{\wh{x}}_{n}\in\mbb{C}^{T}$ is the final refined estimate of the $n$-th UE's signal and $\mb{A}_{nl}\in\mbb{C}^{T\times T}$ is the one-shot WCM assigned to the detected signals of the $n$-th UE at the $l$-th AP, which is required to be optimized to maximize the sum rate. Defining $\mb{A}_n=[\mb{A}_{n1}^\dag,\dots,\mb{A}_{nL}^\dag]^\dag\in\mbb{C}^{LT\times T}$ and $\mb{\Phi}_{nm}=[\mb{U}_{n1}^\dag\mb{H}_{m1};\dots;\mb{U}_{nL}^\dag\mb{H}_{mL}]\in\mbb{C}^{LT\times T}$, then the combined signals of the $n$-th UE can be rewritten as
\begin{equation} \label{equ_FinalSignal}
	\mb{\wh{x}}_{n} = \sum_{m=1}^{N}\mb{A}_n^\dag\mb{\Phi}_{nm}\mb{W}_m\mb{x}_m+\sum_{l=1}^{L}\mb{A}_{nl}^\dag\mb{U}_{nl}^\dag\mb{n}_l.
\end{equation}

Based on the final refined signal in (\ref{equ_FinalSignal}), the achievable rate of the $n$-th UE is given by
\begin{equation}
	R_n=\operatorname{log}\operatorname{det}\left(\mb{I}_{T}+\mb{D}_n^\dag\mb{\Sigma}_n^{-1}\mb{D}_n\right),
\end{equation}
where $\mb{D}_n=\mb{A}_n^\dag\mbb{E}(\mb{\Phi}_{nn})\mb{W}_n$ and $\mb{\Sigma}_n=\sigma^2\mb{A}_n^\dag\mbb{E}(\mb{S}_n)\mb{A}_n+\sum_{m=1}^{N}\mb{A}_n^\dag\mbb{E}(\mb{\Phi}_{nm}\mb{W}_m\mb{W}_m^\dagger\mb{\Phi}_{nm}^\dagger)\mb{A}_n-\mb{D}_n\mb{D}_n^\dagger$ with $\mb{S}_n=\operatorname{blkdiag}(\mb{U}_{n1}^\dag\mb{U}_{n1},\dots,\mb{U}_{nL}^\dag\mb{U}_{nL})$.
The achievable rate is obtained via the lower bound of the mutual information, where the expectation therein is taken with respect to the instantaneous CSI~\cite{wang2022uplink}.

\section{Joint Optimization for Weighted Combining Matrix, Transmit Precoding Matrices, and Phase-Shifting Matrices} \label{sec_AO}
 Under the two-layer signal detection architecture, the UEs' signal can be jointly detected to achieve high communication rate. In addition to the design of the receiver, the communication quality can be also effectively enhanced by optimizing the TPMs at UEs, which can exploit the spatial diversity, and the PSMs at RISs, which can improve the wireless propagation environments. In this section, we investigate the weighted sum-rate maximization problem by jointly optimizing the WCM at the CPU, TPMs at UEs, and PSMs at RISs.
 
We aim to maximize the weighted sum rate as follows
\begin{align}
	\max_{\bm{\mc A},\bm{\mc W},\bm{\varTheta}}\quad  
	&\sum_{n=1}^{N}\mu_n R_n(\bm{\mc A},\bm{\mc W},\bm{\varTheta})  \label{equ_WSR}\\
	~{\rm {s.t.}}~\quad   
	&\mathrm{Tr}(\mb{W}_{n}\mb{W}_{n}^\dag)\leq p_n, 1\leq n\leq N, \tag{\ref{equ_WSR}a} \label{equ_WSRa}\\
	\qquad \quad    	    &\mb{\Theta}_{k}=\operatorname{diag}(e^{j\theta_{k,1}},\cdots,e^{j\theta_{k,L_{k}}}),1\leq k\leq K,	\tag{\ref{equ_WSR}b}	\label{equ_WSRb}
\end{align}
where $\mu_n$ represents the priority of the $n$-th UE. 
The variables $\bm{\mc A}=\{\mb{A}_1,\dots,\mb{A}_N\}$, $\bm{\mc W}=\{\mb{W}_1,\dots,\mb{W}_N\}$ and $\bm{\varTheta}=\{\mb{\Theta}_{1},\dots,\mb{\Theta}_{K}\}$. The constraints (\ref{equ_WSRa}) and (\ref{equ_WSRb}) represent the transmit power constraint and phase shift range constraint, respectively.

Following~\cite{shi2011iteratively}, the weighted sum-rate maximization problem  given in~(\ref{equ_WSR}) is equivalent to the matrix-weighted sum-MSE minimization problem, which is given by
\begin{align} 
	\min_{\bm{\mc A},\bm{\mc W},\bm{\varTheta},\bm{\mc{V}}}\quad  
	&\sum_{n=1}^{N}\mu_n \Big(\operatorname{Tr}(\mb{V}_n\mb{E}_n(\bm{\mc A},\bm{\mc W},\bm{\varTheta}))-\operatorname{log}\operatorname{det}(\mb{V}_n)\Big)  \label{equ_WMMSE}\\
	~{\rm {s.t.}}~\quad & (\ref{equ_WSR}\text{a}),(\ref{equ_WSR}\text{b}), \notag
\end{align}
where $\bm{\mc{V}}=\{\mb{V}_{1},\dots,\mb{V}_{N}\}$ with $\mb{V}_n\succeq \mb{0}_T $ being the auxiliary weight matrix variable for the $n$-th UE. The MSE matrix $\mb{E}_n$ can be written as 
\begin{equation} \label{equ_E}
	\begin{aligned}
		\mb{E}_n &= \mbb{E}\left\{(\mb{x}_n-\mb{\wh{x}}_{n})(\mb{x}_n-\mb{\wh{x}}_{n})^\dagger\right\}\\
		&=\mb{I}_{T}-\mb{W}_n^\dag\mbb{E}(\mb{\Phi}_{nn})^\dagger\mb{A}_n-\mb{A}_n^\dagger\mbb{E}(\mb{\Phi}_{nn})\mb{W}_n\\
		&\quad+\mb{A}_n^\dagger\left(\sum_{m=1}^{N}\mbb{E}(\mb{\Phi}_{nm}\mb{W}_m\mb{W}_m^\dagger\mb{\Phi}_{nm}^\dagger)+\sigma^2\mbb{E}(\mb{S}_n)\right)\mb{A}_n.
	\end{aligned}
\end{equation}

The problem (\ref{equ_WMMSE}) is complicated due to the coupled optimization variables and the non-convex unit-modulus constraints of PSMs, i.e., $\mb{\Theta}$ in (\ref{equ_WSR}\text{b}). Therefore, we propose the AO algorithm to solve the problem (\ref{equ_WMMSE}), which decomposes the original optimization problem into three sub-problems, i.e., the WCM sub-problem, the TPM sub-problem, and the PSM sub-problem, and then optimize the corresponding variables in an alternative manner.


\subsection{The WCM sub-problem}
Fixing the TPMs $\mb{W}$ and the PSMs $\mb{\Theta}$, the optimal one-shot WCM $\mb{A}$ can be obtained by minimizing the sum-MSE problem as follows
\begin{align} \label{equ_MSE}
	\min_{\bm{\mc A}}\quad  
	\sum_{n=1}^{N} \operatorname{Tr}(\mb{E}_n(\bm{\mc A})).
\end{align}

By checking the first-order optimality condition of (\ref{equ_MSE}) with respect to $\mb{A}_n$, the optimal WCM is given by
\begin{equation} \label{equ_optA}
	\mb{A}^{\rm opt}_n = \mbb{E}\left\{\sum_{m=1}^{N}\mb{\Phi}_{nm}\mb{W}_m\mb{W}_m^\dagger\mb{\Phi}_{nm}^\dagger+\sigma^2\mb{S}_n\right\}^{-1}\mbb{E}(\mb{\Phi}_{nn})\mb{W}_n.
\end{equation}

\subsection{The update of weight matrix variables}
Since the objective function (\ref{equ_WMMSE}) is convex with respect to the auxiliary weight matrix variable $\mb{V}_n$, through the first-order optimality
condition for $\mb{V}_n$ and fixing the other variables, we have
\begin{equation}
	\mb{V}^{\rm opt}_n = (\mb{E}_n^{\rm opt})^{-1},
\end{equation}
where $\mb{E}_n^{\rm opt}$ can be obtained by substituting $\mb{A}^{\rm opt}$ into (\ref{equ_E}).

\subsection{The TPM sub-problem}
By fixing $\mb A$, ${\mb V}$ and $\mb{\Theta}$, the optimization problem (\ref{equ_WMMSE}) with respect to the TPMs $\mb{W}$ is reduced to
\begin{align}
	\min_{\bm{\mc W}}\quad  
	&\sum_{n=1}^{N}\mu_n \operatorname{Tr}(\mb{V}_n\mb{E}_n(\bm{\mc W})) \label{equ_WMMSE_W}\\
	~{\rm {s.t.}}~\quad   
	& (\ref{equ_WSR}\text{a}). \notag
\end{align}
Note that the above problem is a convex quadratic optimization problem, which can be solved by using standard convex optimization algorithms. By applying the Lagrange multipliers method and attaching a Lagrange multiplier $\lambda_n$ to the power budget constraint of the $n$-th UE, the Lagrange function can be expressed as
\begin{equation}\label{equ_L}
	\begin{aligned}
		L(\mb{W}_n,\lambda_n) = 
		\sum_{n=1}^{N}&\mu_n \operatorname{Tr}(\mb{V}_n\mb{E}_n(\bm{\mc W}))\\
		&+\lambda_n(\mathrm{Tr}(\mb{W}_{n}{\mb{W}_{n}}^\dag)-p_n).
	\end{aligned}
\end{equation}

To obtain the optimal TPM of the $n$-th UE, we substitute (\ref{equ_E}) into (\ref{equ_L}) and remove some irrelevant items. Then, the Lagrange function can be reduced to (\ref{equ_eqL}), shown at the bottom of this page. The first-order optimal condition of $L(\mb{W}_n,\lambda_n)$ leads to the optimal TPM of the $k$-th UE shown as follows
\setcounter{equation}{18}
\begin{equation} \label{equ_optW}
	\begin{aligned}
		\mb{W}^{\rm opt}_n =
		\mu_n& \left(\sum_{m=1}^{N}\mu_m\mathbb{E}\left(\mb{\Phi}_{mn}^\dagger\mb{A}_m\mb{V}_m\mb{A}_m^\dagger\mb{\Phi}_{mn}\right)+\lambda_n\mb{I}_T\right)^{-1} \\
		&\qquad\times \mathbb{E}(\mb{\Phi}_{nn}^\dagger)\mb{A}_n\mb{V}_n,
	\end{aligned}
\end{equation}
where the Lagrange multiplier $\lambda_k\geq 0$ can be obtained by a one-dimensional (1-D) bisection algorithm, according to the Karush-Kuhn-Tucker (KKT) conditions~\cite{shi2011iteratively}.
 
\begin{figure*}[b] 
	\hrulefill  
	\begin{align} 
		\setcounter{equation}{17}
		&L(\mb{W}_n,\lambda_n) = \sum_{m=1}^{N}\mu_m\Big\{\operatorname{Tr}(\mb{V}_m)-\underbrace{\mathbb{E}\left\{\operatorname{Tr}\left(\mb{V}_m(\mb{W}_m)^\dagger\mb{\Phi}_{mm}^\dagger\mb{A}_m\right)\right\}}_{(\ref{equ_eqL}a)}-\underbrace{\mathbb{E}\left\{\operatorname{Tr}\left(\mb{V}_m\mb{A}_m^\dagger\mb{\Phi}_{mm}\mb{W}_m\right)\right\}}_{(\ref{equ_eqL}b)} \notag \\	&\quad+\underbrace{\mathbb{E}\left\{\operatorname{Tr}\left((\mb{W}_n)^\dagger\mb{\Phi}_{mn}^\dagger\mb{A}_m\mb{V}_m\mb{A}_m^\dagger\mb{\Phi}_{mn}\mb{W}_n\right)\right\}}_{(\ref{equ_eqL}c)}\Big\}+\lambda_n\left(\mathrm{Tr}(\mb{W}_{n}({\mb{W}_{n}})^\dag)-p_n\right). \label{equ_eqL}\\
		\setcounter{equation}{20}
		&f_n(\mb{\bm{\varTheta}}) =   \operatorname{Tr}\Big(\mb{V}_n\mb{A}_n^\dagger\big(\sum_{m=1}^{N}\mbb{E}(\mb{\Phi}_{nm}\mb{W}_m\mb{W}_m^\dagger\mb{\Phi}_{nm}^\dagger)+\sigma^2\mbb{E}(\mb{S}_n)\big)\mb{A}_n\Big)-\operatorname{Tr}(\mb{V}_n\mb{W}_n^\dag\mbb{E}(\mb{\Phi}_{nn})^\dagger\mb{A}_n)
		-\operatorname{Tr}(\mb{V}_n\mb{A}_n^\dagger\mbb{E}(\mb{\Phi}_{nn})\mb{W}_n). \label{equ_f} 
	\end{align}
\end{figure*}

\subsection{The PSM sub-problem} \label{sec_PSM}
By fixing the updated variables $\mb A$, $\mb V$ and $\mb W$, the optimization problem (\ref{equ_WMMSE}) with respect to the PSMs $\mb{\Theta}$ is
\setcounter{equation}{19}
\begin{align} 
	\min_{\bm{\varTheta}}\quad  
	&\sum_{n=1}^{N}\mu_n f_n(\bm{\varTheta})  \label{equ_optif}\\
	~{\rm {s.t.}}~\quad     	    &  (\ref{equ_WSR}\text{b}), \notag
\end{align}
where the objective function $f_n(\bm{\varTheta})$ can be expressed as (\ref{equ_f}), shown at the bottom of this page. 

To solve this non-convex problem, we utilize the Riemannian manifold based gradient descent algorithm to obtain the optimal PSMs~\cite{wang2021joint}.
Defining $\bm{\theta}=\operatorname{blkdiag}(\mb{\Theta}_1,\dots,\mb{\Theta}_K)\mb{1}_{L_R}$, where $\mb{1}_{L_R}\in\mbb{C}^{L_R}$ denotes an all-one vector. The unit-modulus constraint (\ref{equ_WSR}\text{b}) defines an oblique manifold, which can be characterized by
\setcounter{equation}{21}
\begin{equation}
	\mc{O}=\left\{\bm{\theta}\in\mbb{C}^{L_R}\vert\lvert\bm{\theta}_k\rvert=1,1\leq k \leq L_R\right\},
\end{equation}
where the notation $\boldsymbol{\theta}_k$ denotes the $k$-th element of vector $ \boldsymbol{\theta} $ and the operator $|\boldsymbol{\theta}_k|$ means the modulus of $ \boldsymbol{\theta}_k$.

For a given point $\bm{\theta}$ on the manifold $\mc{O}$, the tangent space $\mc{T}$ can be defined as
\begin{equation}
	\mc{T}=\left\{\mb{t}\in\mbb{C}^{L_R}\vert[\mb{t}\bm{\theta}^\dagger]_{kk},\bm{\theta}\in\mc{O},k=1,\dots,L_R\right\},
\end{equation}
where $\mb{t}$ is the tangent vector at the given point $\bm{\theta}$. Similar to the Euclidean space, the direction of the fastest increase of the function ${f}\left(\bm{\varTheta}\right)=\sum_{n=1}^{N}\mu_n {f}_n(\bm{\varTheta})$ among all the tangent vectors is defined as the Riemannian gradient, given by~\cite{wang2021joint}
\begin{equation}
	\begin{aligned}
		\operatorname{grad}_{\bm{\theta}}{f}=\nabla {f}(\bm{\theta}) -\operatorname{Re}\left\{\nabla {f}(\bm{\theta})\odot\bm{\theta}^\star\right\}\odot\bm{\theta},
	\end{aligned}
\end{equation}
where $\nabla {f}(\bm{\theta})=2\frac{d{f}}{d\bm{\theta}^\star}$ denotes the Euclidean gradient of ${f}$ with respect to $\bm{\theta}$ and the derivative $\frac{d{f}}{d\bm{\theta}^\star}$ is given by
\begin{equation}
	\begin{aligned}
		\frac{d{f}}{d\bm{\theta}^\star} 
		= \left[\frac{d{f}}{d\bm{\theta}_{1}^\star},\dots,\frac{d{f}}{d\bm{\theta}_{L_R}^\star}\right]^{\rm T}
		=\left[\frac{d{f}}{d\mb{\Theta}_{1,1}^\star},\dots,\frac{d{f}}{d\mb{\Theta}_{K,L_K}^\star}\right]^{\rm T},
	\end{aligned}
\end{equation}
where $\mb{\Theta}_{k,l_k}$ denotes the $l_k$-th phase-shift of the $k$-th RIS.
The expression of the derivative $\frac{d{f}}{d\mb{\Theta}_{k,l_k}^\star}$ can be directly obtained based on the matrix derivation. For lack of space, the expressions are omitted here.

Based on the Riemannian steepest descent algorithm, the descent direction at the $i$-th iteration is chosen as the Riemannian gradient, i.e., $\bm{\delta}^{(i)}=\operatorname{grad}_{\bm{\theta}^{(i)}}{f}$. Thus, in the $(i + 1)$-th iteration, we update $\bm{\theta}_k^{(i+1)}$ by
\begin{equation}  \label{equ_update_theta}
	\bm{\theta}^{(i+1)}_k=\frac{\bm{\theta}^{(i)}_k+\alpha^{(i)}_k\bm{\delta}^{(i)}_k}{\lvert\bm{\theta}^{(i)}_k+\alpha^{(i)}_k\bm{\delta}^{(i)}_k\rvert},
\end{equation}
where $\alpha^{(i)}\in\mbb{C}^{L_R}$ is the step size in the $i$-th iteration and the operation of the denominator aims to guarantee the constant-modulus constraint of $\mb{\Theta}$.

Note that expressions of WCM, TPMs and PSMs are in semi-closed forms, where the channel samples must be exchanged between APs and CPU in order to obtain the statistical expectations in (\ref{equ_optA}), (\ref{equ_optW}) and (\ref{equ_f}). The heavy and inevitable information interaction overhead prevents the joint optimization framework from being realistically deployed. 

\section{Operator-Valued Free Probability Theory for Asymptotic AO Algorithm Design} \label{sec_asymAO} 

In this section, we first reformulate each optimization sub-problem as a function of the Cauchy transform. Leveraging the operator-valued free probability theory, the exact expressions of the Cauchy transform can be derived by using statistical CSI. Consequently, the asymptotic expression of each sub-problem in the AO framework can be obtained and alternately solved based on the statistical CSI at the CPU. Since the statistical CSI changes slower than small-scale fading coefficients and is assumed to be constant over several coherent blocks~\cite{nayebi2016performance}, the statistical CSI only needs to be transmitted once from APs to CPU, which avoids frequent information exchanges between APs and the CPU and greatly reduce the interaction overhead.
To clarify the iteration process, the variables in the $i$-th iteration are denoted as $\mb{A}^{(i)}$, $\mb{W}^{(i)}$, and $\mb{\Theta}^{(i)}$. The optimization in the $(i+1)$-th iteration is given as follows.

\subsection{Optimization of $\mb A^{(i+1)}$ for given $\mb W^{(i)}$ and $\mb{\Theta}^{(i)}$} \label{sec_optiA}
Recalling the sum-MSE problem (\ref{equ_MSE}), the objective function can be denoted as 
\begin{equation} \label{equ_TrE}
	\begin{aligned}
		\operatorname{Tr}(\mb{E}_n) &= T-	\underbrace{\operatorname{Tr}(\mb{W}_n^\dag\mbb{E}(\mb{\Phi}_{nn})^\dagger\mb{A}_n)}_{(\ref{equ_TrE}\mathrm{a})}-	\underbrace{\operatorname{Tr}(\mb{A}_n^\dagger\mbb{E}(\mb{\Phi}_{nn})\mb{W}_n)}_{(\ref{equ_TrE}\mathrm{b})}\\
		+	&\underbrace{\operatorname{Tr}\left(\mb{A}_n^\dagger\mbb{E}\Big\{\sum_{m=1}^{N}\mb{\Phi}_{nm}\mb{W}_m\mb{W}_m^\dagger\mb{\Phi}_{nm}^\dagger+\sigma^2\mb{S}_n\Big\}\mb{A}_n\right)}_{(\ref{equ_TrE}\mathrm{c})},
	\end{aligned}
\end{equation}
 Note that the expectation matrices in (\ref{equ_TrE}) are based on the time-consuming numerical simulations and coupled with optimization variables $\mb{W}$ and $\mb{\Theta}$, which need frequent interaction between APs and CPU. To facilitate calculation and reduce interaction overhead, we resort to the Cauchy transform to obtain the asymptotic expressions of (\ref{equ_TrE}), given as follows.
 
Substituting (\ref{equ_U}) into (\ref{equ_TrE}a) and invoking the matrix inversion lemma~\cite{couillet2011random}, the equation (\ref{equ_TrE}a) can be rewritten as 
\begin{equation}
	\begin{aligned}	&\operatorname{Tr}(\mb{W}_n^\dag\mbb{E}(\mb{\Phi}_{nn})^\dagger\mb{A}_n) 
	=\sum_{l=1}^{L}\mbb{E}\left\{\operatorname{Tr}(\mb{W}_n^\dag\mb{H}_{nl}^\dag\mb{U}_{nl}\mb{A}_{nl})\right\}\\
	&=\sum_{l=1}^{L}\mbb{E}\left\{\operatorname{Tr}\left\{\mb{A}_{nl}\left\{\mb{I}_{T_t}-\sigma^2(\sigma^2\mb{I}_{T_t}+\mb{B}_l)^{-1}\right\}_{nn}\right\}\right\},
	\end{aligned}
\end{equation}
where the notation $\{\}_{nn}$ denotes the $n$-th diagonal block matrix of dimension $T\times T$ and the Gram matrix $\mb{B}_l=\mb{\wh {W}}^\dag\mb{H}_l^\dagger\mb{H}_l\mb{\wh {W}}$. Consider the large-dimensional regime\footnote{The proposed algorithm is still applicable the scenarios with a finite number of antennas and passive reflecting elements and the approximate accuracy is always at a high level, which can be illustrated in Fig.~\ref{Fig_Accuracy}.} as
	\begin{equation}
		R\to\infty,T\to\infty,L_R\to \infty, R/T=\epsilon_1, L_R/T=\epsilon_2,
	\end{equation}
	where $\epsilon_1$ and $\epsilon_2$ are constants. According to the operator-valued free probability theory~\cite{lu2016free}, the Cauchy transform of $\mb{B}_l$ and its corresponding asymptotic expression are given in the following proposition.
\begin{proposition} \label{prop_Cauchy}
	The definition of the Cauchy transform $\mc{G}_{\mb{\Xi},\mb{B}_{l}}(z)$, with $z\in\mbb{C}^+$, is given as follows
	\begin{equation}\label{equ_asymCauchy}
		\begin{aligned}
			\mc{G}_{\mb{\Xi},\mb{B}_{l}}(z)&=\frac{1}{T_t}\mbb{E}\left\{\mathrm{Tr}\left\{\mb{\Xi}(z\mb{I}_{T_t}-\mb{B}_l)^{-1}\right\}\right\} \\
			&\to_{a.s.}\frac{1}{T_t}\mathrm{Tr}\left(\mb{\Xi}\mc{\wt B}_{l}(z)\right),
		\end{aligned}
	\end{equation}
	where $\mb{\Xi}\in\mbb{C}^{T_t\times T_t}$ is a nonnegative definite matrix with uniformly bounded spectral norm and
	\begin{equation}
		\mc{\wt B}_{l}(z)=\left(\bm{\Upsilon}(z)- \mb{\wh W}^\dagger \mb{\ob F}_l^\dagger\mb{\Omega}^{-1}(z)\mb{\ob F}_l\mb{\wh W}\right)^{-1},
	\end{equation}
	the matrix ${\bf{\Omega}}(z)$ is defined as
	\begin{equation} \label{equ_vPi}
		\mb{\Omega}(z)=\bm{\wt \Upsilon}(z)-\left(\bm{\Gamma}(z)-\mb{\wh \Theta}^\dagger\mb{\ob G}_l^{\dagger}\bm{\wt \Gamma}^{-1}(z)\mb{\ob G}_l\mb{\wh \Theta}\right)^{-1}.
	\end{equation}
	
	The matrix-valued functions $\bm{\Upsilon}(z)$, $\bm{\Gamma}(z)$, $\bm{\wt \Gamma}(z)$ and $\bm{\wt \Upsilon}(z)$ satisfy the following fixed-point equations
	\begin{align}
		\bm{\Upsilon}(z) &= z\mb{I}_{T_t}-\mb{\wh W}^\dagger\Big(\zeta_{0l}(\mc{G}_{\mb{X}_{40}}(z)) + \sum_{k=1}^{K}\zeta_{k}(\mc{G}_{\mb{X}_{4k}}(z))\Big) \mb{\wh W}, \notag \\
		\bm{\Gamma}(z)&=-\mb{\wh \Theta}^\dagger\operatorname{blkdiag}\left\{\mb{0}_R,\eta_{1l}(\mc{G}_{\mb{X}_3}(z)),\dots,\eta_{Kl}(\mc{G}_{\mb{X}_3}(z))\right\}\mb{\wh \Theta}, \notag\\
		\bm{\wt \Gamma}(z)&=\mb{I}_{R}-\sum_{k=1}^{K}\widetilde{\eta}_{kl}\left(\mb{\Theta}_k\mc{G}_{\mb{X}_{2k}}(z)\mb{\Theta}_k^\dagger\right), \notag\\
		\bm{\wt \Upsilon}(z)&=-\operatorname{blkdiag}\Big\{\wt\zeta_{0l}({\mb{\wh W}\mc{G}_{\mb{X}_1}(z)\mb{\wh W}}^\dagger), \notag\\
		&\quad\wt\zeta_{1}({\mb{\wh W}\mc{G}_{\mb{X}_1}(z)\mb{\wh W}}^\dagger),\dots,\wt\zeta_{K}({\mb{\wh W}\mc{G}_{\mb{X}_1}(z)\mb{\wh W}}^\dagger)\Big\} \notag,
	\end{align}
	where the matrices $\eta_{kl}(\widetilde{\mb{D}})$, $\widetilde{\eta}_{kl}({\mb{D}})$, $\zeta_{0l}(\mb{Z}_0)$, $\widetilde{\zeta}_{0l}({\mb{\wt Z}_0})$, $\zeta_{k}(\mb{Z})$ and $\widetilde{\zeta}_{k}(\widetilde{\mb{Z}})$ are given by (\ref{equ_eta})-(\ref{equ_tildezeta}) in Appendix~\ref{appen_Cauchy}. The equations $\mc{G}_{\mb{X}_1}(z)$, $\mc{G}_{\mb{X}_{2k}}(z)$, $\mc{G}_{\mb{X}_3}(z)$ and $\mc{G}_{\mb{X}_{4k}}(z)$ are given by
	\begin{align}  
		\mc{G}_{\mb{X}_1}(z) &= \mc{\wt B}_{l}(z), \notag\\
		\mc{G}_{\mb{X}_{2k}}(z) &= \Big\{\Big(\bm{\Gamma}(z)-\mb{\wh \Theta}^\dagger\mb{\ob G}_l^{\dagger}\bm{\wt \Gamma}^{-1}(z)\mb{\ob G}_l\mb{\wh \Theta} \notag\\
		-&\left(\bm{\wt \Upsilon}(z)-\mb{\ob F}_l\mb{\wh W}\bm{\Upsilon}^{-1}(z)\mb{\wh W}^\dagger \mb{\ob F}_l^\dagger\right)^{-1}\Big)^{-1}\Big\}_{k+1},1\leq k\leq K, \notag\\
		\mc{G}_{\mb{X}_3}(z) &= \Big(\bm{\wt \Gamma}(z)-\mb{\ob G}_l\mb{\wh \Theta}\Big(\bm{\Gamma}(z) \notag\\
		&-\left(\bm{\wt \Upsilon}(z)-\mb{\ob F}_l\mb{\wh W}\bm{\Upsilon}^{-1}(z)\mb{\wh W}^\dagger \mb{\ob F}_l^\dagger\right)^{-1}\Big)^{-1}\mb{\wh \Theta}^\dagger\mb{\ob G}_l^{\dagger}\Big)^{-1},\notag \\
		\mc{G}_{\mb{X}_{4k}}(z) &= \Big\{\Big(\bm{\wt \Upsilon}(z)-\mb{\ob F}_l\mb{\wh W}\bm{\Upsilon}^{-1}(z)\mb{\wh W}^\dagger \mb{\ob F}_l^\dagger\notag\\
		-&\left(\bm{\Gamma}(z)-\mb{\wh \Theta}^\dagger\mb{\ob G}_l^{\dagger}\bm{\wt \Gamma}^{-1}(z)\mb{\ob G}_l\mb{\wh \Theta}\right)^{-1}\Big)^{-1}\Big\}_{k+1},0\leq k\leq K,\notag
	\end{align}
	where the notation $\{\mb A\}_{k+1}$ with $\mb A\in \mbb{C}^{L_{AR}\times L_{AR}}$ denotes the $(k+1)$-th diagonal matrix block containing entries from $\sum_{i=0}^{k-1}L_i+1$ to $\sum_{i=0}^{k}L_i$ rows and columns of $\mb{A}$.
\end{proposition}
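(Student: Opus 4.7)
The plan is to establish the deterministic equivalent for $\mc{G}_{\mb{\Xi},\mb{B}_l}(z)$ by combining a linearization trick with the operator-valued subordination machinery of free probability~\cite{lu2016free}. The key observation is that $\mb{B}_l=\mb{\wh W}^\dag\mb{F}_l^\dag\mb{\wh \Theta}^\dag\mb{G}_l^\dag\mb{G}_l\mb{\wh \Theta}\mb{F}_l\mb{\wh W}$ is a nested product of two independent families of Weichselberger Gaussian matrices plus deterministic LoS means, so the resolvent $(z\mb{I}_{T_t}-\mb{B}_l)^{-1}$ cannot be analyzed directly. Instead I would embed it as a Schur complement of a Hermitian block matrix of dimension $(T_t+L_{AR}+R)\times(T_t+L_{AR}+R)$ whose off-diagonal blocks depend linearly on $\mb{F}_l$ and $\mb{G}_l$; this reduces the problem to computing the matrix-valued Cauchy transform of a self-adjoint polynomial of degree one in the independent centered Gaussian matrices $\mb{X}_{0,nl},\mb{X}_{nk},\mb{Y}_{kl}$.

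First I would split each channel into its LoS and NLoS pieces as in (\ref{equ_F0})--(\ref{equ_G}) and observe that each centered part $\mb{T}(\mb{M}\odot\mb{X})\mb{P}^\dag$ is an operator-valued semicircular element whose $\eta$-covariance is determined by the Hadamard action of the variance profile together with the one-sided correlation multipliers $\mb{T},\mb{P}$ on the left/right. Applying the matrix-valued subordination formula for operator-valued semicircular elements separately to the $\mb{F}$-layer and the $\mb{G}$-layer of the linearization yields two coupled fixed-point systems: an inner one governing the pre-reflection layer via $\bm{\Upsilon}(z),\bm{\wt \Upsilon}(z)$ with the Weichselberger mappings $\zeta_{0l},\wt\zeta_{0l},\zeta_k,\wt\zeta_k$, and an outer one governing the post-reflection layer via $\bm{\Gamma}(z),\bm{\wt \Gamma}(z)$ with $\eta_{kl},\wt\eta_{kl}$. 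Substituting the resulting asymptotic replacements for the argument matrices $\mc{G}_{\mb{X}_1},\dots,\mc{G}_{\mb{X}_{4k}}$ reproduces the expressions stated in the proposition.

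Next I would treat the deterministic LoS blocks $\mb{\ob F}_l$ and $\mb{\ob G}_l$ as additive bounded-rank perturbations inside the linearization. Taking the Schur complement with respect to these LoS sub-blocks gives exactly the quadratic correction terms $\mb{\wh W}^\dag\mb{\ob F}_l^\dag\mb{\Omega}^{-1}(z)\mb{\ob F}_l\mb{\wh W}$ and $\mb{\wh \Theta}^\dag\mb{\ob G}_l^\dag\bm{\wt \Gamma}^{-1}(z)\mb{\ob G}_l\mb{\wh \Theta}$ featured in $\mc{\wt B}_l(z)$ and $\mb{\Omega}(z)$. Stitching the two Schur reductions together yields the compact form $\mc{\wt B}_l(z)=(\bm{\Upsilon}(z)-\mb{\wh W}^\dag\mb{\ob F}_l^\dag\mb{\Omega}^{-1}(z)\mb{\ob F}_l\mb{\wh W})^{-1}$, while the auxiliary arguments $\mc{G}_{\mb{X}_1},\mc{G}_{\mb{X}_{2k}},\mc{G}_{\mb{X}_3},\mc{G}_{\mb{X}_{4k}}$ are identified with the corresponding diagonal sub-blocks of the inverted linearization.

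Finally, the almost-sure convergence in (\ref{equ_asymCauchy}) would follow from standard concentration: the Gaussian Poincar\'e inequality gives $\mathrm{Var}(\frac{1}{T_t}\mathrm{Tr}(\mb{\Xi}(z\mb{I}-\mb{B}_l)^{-1}))=O(T^{-2})$ uniformly on compact subsets of $\mbb{C}^+$, and a Borel--Cantelli argument upgrades convergence in mean to almost-sure convergence in the regime $R/T=\epsilon_1,\,L_R/T=\epsilon_2$. I expect the hardest part to be establishing existence and uniqueness of the coupled fixed-point system on the operator upper half-plane, since the four subsystems are interleaved through the LoS Schur corrections; I would address this by verifying that all $\zeta$ and $\eta$ mappings are completely positive and order-preserving, so that the induced combined map preserves the Siegel upper half-plane and admits a unique attracting fixed point by the analytic fixed-point theory of operator-valued Cauchy transforms in~\cite{lu2016free}.
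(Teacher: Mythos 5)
Your overall strategy is the same as the paper's: Anderson linearization of the degree-four polynomial $\mb{B}_l$, identification of the centered Weichselberger parts as an operator-valued semicircular family with covariance maps given by the one-sided correlation functions $\zeta,\wt\zeta,\eta,\wt\eta$, a single subordination step against the deterministic LoS part, and block inversion (Schur complements) to read off $\mc{\wt B}_l(z)$, $\mb{\Omega}(z)$ and the four auxiliary Cauchy transforms from the diagonal blocks of the inverted linearization. The concentration plus Borel--Cantelli upgrade to almost-sure convergence is also consistent with what the paper implicitly relies on.

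There is, however, one concrete error that would stall the construction as you have written it: the linearization cannot have dimension $(T_t+L_{AR}+R)$ if its off-diagonal blocks are to depend \emph{linearly} on $\mb{F}_l$ and $\mb{G}_l$. The monomial $\mb{\wh W}^\dagger\mb{F}_l^\dagger\mb{\wh\Theta}^\dagger\mb{G}_l^\dagger\mb{G}_l\mb{\wh\Theta}\mb{F}_l\mb{\wh W}$ traverses the chain $T_t\to L_{AR}\to R\to L_{AR}\to T_t$, so the $L_{AR}$ stage must appear twice; with only three blocks the $(1,1)$ Schur complement of any Hermitian matrix linear in the channels produces expressions of the form $\mb{\wh W}^\dagger\mb{F}_l^\dagger(\,\cdot\,)^{-1}\mb{F}_l\mb{\wh W}$ (an inverse where a product is needed), not $\mb{B}_l$. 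The paper's linearization $\mb{L}_{\mb{B}_l}$ in (\ref{equ_LB}) is of size $n=T_t+2L_{AR}+R$ for exactly this reason, and the subalgebra $\mc{D}$ must be block-diagonal with respect to that four-block (and, inside each $L_{AR}$ block, the per-RIS) partition for the $\mc{G}_{\mb{X}_{2k}}$ and $\mc{G}_{\mb{X}_{4k}}$ sub-blocks to be well defined. A second, smaller point: the LoS means $\mb{\ob F}_l,\mb{\ob G}_l$ should not be described as bounded-rank perturbations --- they are general deterministic matrices whose contribution does not vanish asymptotically; the paper handles them by asserting (via the cited result) that the semicircular part $\mb{\wt L}_{\mb{B}_l}$ is free from the deterministic part $\mb{\ob L}_{\mb{B}_l}$ over $\mc{D}$ and applying the subordination formula (\ref{equ_subordination}), which is also the mechanism your Schur-complement computation actually uses. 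With the linearization enlarged to four blocks and the "bounded-rank" framing dropped, your argument matches the paper's proof.
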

\begin{proof}
	The definition and derivation of the Cauchy transform $\mc{G}_{\mb{\Xi},\mb{B}_{l}}(z)$ are given in~Appendix~\ref{appen_Cauchy}.
\end{proof}

Based on Proposition~\ref{prop_Cauchy}, the Cauchy transform $\mc{G}_{\mb{\Xi},\mb{B}_{l}}(z)$ can be obtained by the statistical CSI, including the mean and correlation matrices of each channel. Therefore, by invoking the Cauchy transform~(\ref{equ_asymCauchy}), the equation (\ref{equ_TrE}a) almost surely converges to 
\begin{equation}\label{equ_asym1}
	\begin{aligned}
		\operatorname{Tr}\left(\mb{W}_n^\dag\mbb{E}(\mb{\Phi}_{nn})^\dagger\mb{A}_n\right)\to_{a.s.}\operatorname{Tr}\left(\mb{\Psi}_n^\dagger\mb{A}_n\right),
	\end{aligned}
\end{equation}
where the block matrix $\mb{\Psi}_n\in\mbb{C}^{LT\times T}$ with $z=-\sigma^2$ is
\begin{equation} \label{equ_Psi}
	\mb{\Psi}_n=\left[
	\begin{array} {c}
		\big\{\mb{I}_{T_t}+\sigma^2\mc{\wt B}_{1}(z)\big\}_{nn}\\
		\vdots\\
		\big\{\mb{I}_{T_t}+\sigma^2\mc{\wt B}_{L}(z)\big\}_{nn}\\
	\end{array}\right].
\end{equation}

Following a similar process, the equation (\ref{equ_TrE}b) almost surely converges to 
\begin{equation}\label{equ_asym2}
	\operatorname{Tr}\left(\mb{A}_n^\dagger\mbb{E}(\mb{\Phi}_{nn})\mb{W}_n\right)\to_{a.s.}\operatorname{Tr}\left(\mb{A}_n^\dagger\mb{\Psi}_n\right).
\end{equation}

The equation (\ref{equ_TrE}c) can be written as
\begin{equation}
	\begin{aligned}
		&\quad\operatorname{Tr}\left(\mb{A}_n^\dagger\mbb{E}\Big\{\sum_{m=1}^{N}\mb{\Phi}_{nm}\mb{W}_m\mb{W}_m^\dagger\mb{\Phi}_{nm}^\dagger+\sigma^2\mb{S}_n\Big\}\mb{A}_n\right) \\ 
		&=\mbb{E}\left\{\operatorname{Tr}\left(\mb{A}_n^\dagger\mb{Q}_n\mb{A}_n\right)\right\},
	\end{aligned}
\end{equation}
where $\mb{Q}_n=\mb{\Phi}_n\mb{\wh{W}}\mb{\wh{W}}^\dagger\mb{\Phi}_n^\dagger+\sigma^2\mb{S}_n\in\mbb{C}^{LT\times LT}$ and $\mb{\Phi}_n=[\mb{\Phi}_{n1},\dots,\mb{\Phi}_{nN}]\in\mbb{C}^{LT\times T_t}$. The expression of the $l$-th $T\times T$ diagonal submatrix block of $\mb{Q}_n$ is given by
\begin{equation} \notag
	\begin{aligned}
		\left\{\mb{Q}_n\right\}_{ll}&=\sum_{m=1}^{N}\mb{U}_{nl}^\dagger\mb{H}_{ml}\mb{W}_m\mb{W}_m^\dagger\mb{H}_{ml}^\dagger\mb{U}_{nl}+\sigma^2\mb{U}_{nl}^\dag\mb{U}_{nl}\\
		&=\left\{\mb{I}_{T_t}-\sigma^2(\sigma^2\mb{I}_{T_t}+\mb{B}_l)^{-1}\right\}_{nn}.
	\end{aligned}
\end{equation}
The expression of the $(l,q)$-th $T\times T$ submatrix block of $\mb{Q}_n$ is given by
\begin{equation}\notag
	\begin{aligned}
		&\left\{\mb{Q}_n\right\}_{lq}=\sum_{m=1}^{N}\mb{U}_{nl}^\dagger\mb{H}_{ml}\mb{W}_m\mb{W}_m^\dagger\mb{H}_{mq}^\dagger\mb{U}_{nq}\\
		&=\left\{\left(\mb{I}_{T_t}-\sigma^2(\sigma^2\mb{I}_{T_t}+\mb{B}_l)^{-1}\right)\left(\mb{I}_{T_t}-\sigma^2(\sigma^2\mb{I}_{T_t}+\mb{B}_q)^{-1}\right)\right\}_{nn}.
	\end{aligned}
\end{equation}
Therefore, the equation (\ref{equ_TrE}c) can be expressed as 
\begin{align}
		\mbb{E}\left\{\operatorname{Tr}\left(\mb{A}_n^\dagger\mb{Q}_n\mb{A}_n\right)\right\}&=\sum_{l=1}^{L}\sum_{q=1}^{L}\mbb{E}\left\{\operatorname{Tr}\left(\mb{A}_{nl}^\dagger[\mb{Q}_n]_{lq}\mb{A}_{nq}\right)\right\} \notag \\
		&\to_{a.s.} \operatorname{Tr}\left\{\mb{A}_n^\dagger\mb{\wt Q}_n\mb{A}_n\right\}, \label{equ_AQA}
\end{align}
where the block matrix $\mb{\wt Q}_n\in\mbb{C}^{LT\times LT}$, whose $(l,q)$-th submatrix block is 
\begin{equation} \label{equ_tildeQ}
	\left\{\mb{\wt Q}_n\right\}_{lq}=
	\begin{aligned}
		&\left\{
		\begin{aligned}
			&\Big\{\mb{I}_{T_t}+\sigma^2\mc{\wt B}_{l}(z)\Big\}_{nn},  &&l=q,\\
			&\left\{(\mb{I}_{T_t}+\sigma^2\mc{\wt B}_{l}(z))(\mb{I}_{T_t}+\sigma^2\mc{\wt B}_{q}(z))\right\}_{nn},  &&l\neq q.
		\end{aligned}
		\right.
	\end{aligned}
\end{equation}

Therefore, the sum-MSE problem (\ref{equ_MSE}) can be asymptotically expressed as
\begin{align}
	\min_{\bm{\mc{A}}}\quad  
	&\sum_{n=1}^{N} \operatorname{Tr}(\mb{\wt E}_n),  \label{equ_eMMSE}
\end{align}
where 
\begin{equation} \label{equ_asymE}
		\operatorname{Tr}(\mb{\wt E}_n) = T-	\operatorname{Tr}(\mb{\Psi}_n^\dagger\mb{A}_n)-	\operatorname{Tr}(\mb{A}_n^\dagger\mb{\Psi}_n)
		+	\operatorname{Tr}\left(\mb{A}_n^\dagger\mb{\wt Q}_n\mb{A}_n\right).
\end{equation}

By minimizing the asymptotic sum-MSE problem (\ref{equ_eMMSE}), the asymptotic expression of the WCM $\mb{A}_n$ is given by
\begin{equation} \label{equ_A}
	\mb{A}_n^{(i+1)}=\mb{\wt Q}_n^{-1}\mb{\Psi}_n.
\end{equation}

\subsection{Optimization of $\mb V^{(i+1)}$ for given $\mb A^{(i+1)}$, $\mb W^{(i)}$ and $\mb{\Theta}^{(i)}$}
By substituting the asymptotic expressions (\ref{equ_asymE}) and (\ref{equ_A}) into (\ref{equ_WMMSE}) and fixing the other variables, the optimization problem with respect to $\mb{V}$ can be rewritten as
\begin{align}
	\min_{\bm{\mc{V}}}\quad  
	\sum_{n=1}^{N}\mu_n \left(\operatorname{Tr}\left(\mb{V}_n(\mb{I}_T-\mb{\Psi}_n^\dagger\mb{\wt Q}_n^{-1}\mb{\Psi}_n)\right)-\operatorname{log}\operatorname{det(\mb{V}_n)}\right).  \label{equ_equWMMSE}
\end{align}

 By taking the first-order derivative of (\ref{equ_equWMMSE}), the asymptotic expression of the auxiliary weight matrix $\mb V_n$ is given by
\begin{equation} \label{equ_optV}
	{\mb V}_n^{(i+1)} = \left(\mb{I}_T-\mb{\Psi}_n^\dagger\mb{\wt Q}_n^{-1}\mb{\Psi}_n\right)^{-1}.
\end{equation}


\subsection{Optimization of $\mb W^{(i+1)}$ for given $\mb A^{(i+1)}$, ${\mb V}^{(i+1)}$ and $\mb{\Theta}^{(i)}$}
Recall that the optimal TPMs can be obtained based on the Lagrange function (\ref{equ_eqL}), thus we aim to obtain the asymptotic expression of (\ref{equ_eqL}). In the ($i+1$)-th iteration, by invoking the Cauchy transform of the Gram matrix $\mb{B}_l$, the equation (\ref{equ_eqL}a) almost surely converges to  
\begin{equation}
	\begin{aligned}
		&\mathbb{E}\left\{\operatorname{Tr}\left(\mb{V}_n(\mb{W}_n^{(i+1)})^\dagger\mb{\Phi}_{nn}^\dagger\mb{A}_n\right)\right\}\\
		&\to_{a.s.} \operatorname{Tr}\left(\mb{V}_n(\mb{W}_n^{(i+1)})^\dagger((\mb{W}_n^{(i)})^\dagger)^{-1}\mb{\Psi}_n^\dagger\mb{A}_n\right).
	\end{aligned}
\end{equation}

Similarly, the equation (\ref{equ_eqL}b) can be rewritten as 
\begin{equation}
	\begin{aligned}
		&\mathbb{E}\left\{\operatorname{Tr}\left(\mb{V}_n\mb{A}_n^\dagger\mb{\Phi}_{nn}\mb{W}_n^{(i+1)}\right)\right\}\\
		&\to_{a.s.} \operatorname{Tr}\left((\mb{W}_n^{(i)})^{-1}\mb{W}_n^{(i+1)}\mb{V}_n\mb{A}_n^\dagger\mb{\Psi}_n\right).
	\end{aligned}
\end{equation}

Based on the matrix inversion lemma, the equation (\ref{equ_eqL}c) can be asymptotically expressed as 
\begin{equation}
	\begin{aligned}
		&\mathbb{E}\left\{\operatorname{Tr}\left((\mb{W}_n^{(i+1)})^\dagger\mb{\Phi}_{mn}^\dagger\mb{A}_m\mb{V}_m\mb{A}_m^\dagger\mb{\Phi}_{mn}\mb{W}_n^{(i+1)}\right)\right\} \\
		&\to_{a.s.}\sum_{l=1}^{L}\sum_{q=1}^{L}\operatorname{Tr}\Big\{(\mb{W}_n^{(i+1)})^\dagger((\mb{W}_n^{(i)})^\dagger)^{-1}\left\{\mb{I}_{T_t}+\sigma^2\mc{\wt B}_{l}(z)\right\}_{nm}\\
		&\qquad\qquad\mb{A}_{ml}\mb{V}_m\mb{A}_{mq}^\dagger\left\{\mb{I}_{T_t}+\sigma^2\mc{\wt B}_{q}(z)\right\}_{mn}(\mb{W}_n^{(i)})^{-1}\mb{W}_n^{(i+1)}\Big\}.
	\end{aligned}
\end{equation}

Therefore, the Lagrange function (\ref{equ_eqL}) almost surely converges to (\ref{equ_equL}), shown at the top of the next page. Then, the first-order optimal condition of $\wt L(\mb{W}_n^{(i+1)},\lambda_n)$ leads to the optimal asymptotic TPM of the $n$-th UE as (\ref{equ_W}), shown at the top of the next page.

\begin{figure*}[bt] 
	\begin{align} 
		\wt L(\mb{W}_n^{(i+1)},\lambda_n)& = \sum_{m=1}^{N}\mu_m\Big\{\operatorname{Tr}(\mb{V}_m)-\operatorname{Tr}\left(\mb{V}_m(\mb{W}_m^{(i+1)})^\dagger((\mb{W}_m^{(i)})^\dagger)^{-1}\mb{\Psi}_m^\dagger\mb{A}_m\right) -\operatorname{Tr}\left((\mb{W}_m^{(i)})^{-1}\mb{W}_m^{(i+1)}\mb{V}_m\mb{A}_m^\dagger\mb{\Psi}_m\right) \notag\\
		&+ \sum_{l=1}^{L}\sum_{q=1}^{L}\operatorname{Tr}\left\{(\mb{W}_n^{(i+1)})^\dagger((\mb{W}_n^{(i)})^\dagger)^{-1}\left\{\mb{I}_{T_t}+\sigma^2\mc{\wt B}_{l}(z)\right\}_{nm}\mb{A}_{ml}\mb{V}_m\mb{A}_{mq}^\dagger\left\{\mb{I}_{T_t}+\sigma^2\mc{\wt B}_{l}(z)\right\}_{mn}(\mb{W}_n^{(i)})^{-1}\mb{W}_n^{(i+1)}\right\}\Big\}\notag\\
		&+\lambda_n\left(\mathrm{Tr}(\mb{W}_{n}^{(i+1)}({\mb{W}_{n}^{(i+1)}})^\dag)-p_n\right) \label{equ_equL}.\\
		\mb{W}_n^{(i+1)}&=\mu_n\left(\sum_{m=1}^{N}\mu_m\sum_{l=1}^{L}\sum_{q=1}^{L}((\mb{W}_n^{(i)})^\dagger)^{-1}\left\{\mb{I}_{T_t}+\sigma^2\mc{\wt B}_{l}(z)\right\}_{nm}\mb{A}_{ml}\mb{V}_m\mb{A}_{mq}^\dagger\left\{\mb{I}_{T_t}+\sigma^2\mc{\wt B}_{q}(z)\right\}_{mn}(\mb{W}_n^{(i)})^{-1}+\lambda_n\mb{I}_{T}\right)^{-1} \notag\\
		&\qquad\qquad\qquad\qquad\times ((\mb{W}_n^{(i)})^\dagger)^{-1}\mb{\Psi}_n^\dagger\mb{A}_n\mb{V}_n.  \label{equ_W}\\
		\setcounter{equation}{51}
		\left\{\mb{\wt Q}^\prime_n\right\}_{lq}&=
		\begin{aligned}
			\left\{
			\begin{aligned}
				&\sigma^2\left\{\mc{\wt B}^\prime_{l}(z)\right\}_{nn},  &&l=q,\\
				&\left\{\sigma^2\mc{\wt B}_l^\prime(z)(\mb{I}_{T_t}+\sigma^2\mc{\wt B}_q(z))+\sigma^2(\mb{I}_{T_t}+\sigma^2\mc{\wt B}_l(z))\mc{\wt B}_q^\prime(z)\right\}_{nn},   &&l\neq q.
			\end{aligned}
			\right.
		\end{aligned} \label{equ_wtQP}
	\end{align}
	\hrulefill
\end{figure*}

\subsection{Optimization of $\mb{\Theta}^{(i+1)}$ for given $\mb A^{(i+1)}$, $\mb V^{(i+1)}$ and $\mb W^{(i+1)}$}
%
By substituting the asymptotic expressions (\ref{equ_asym1}), (\ref{equ_asym2}) and (\ref{equ_AQA}) into (\ref{equ_f}), the asymptotic expression of $f_n\left(\bm{\varTheta}\right)$ is 
\setcounter{equation}{48}
\begin{equation}
	\begin{aligned}
		\wh {f}_n(\bm{\varTheta})&=\operatorname{Tr}(\mb{V}_n\mb{A}_n^\dagger\mb{\wt Q}_n\mb{A}_n)-\operatorname{Tr}(\mb{V}_n\mb{\Psi}_n^\dagger\mb{A}_n)-	\operatorname{Tr}(\mb{V}_n\mb{A}_n^\dagger\mb{\Psi}_n),
	\end{aligned}
\end{equation}
where the expressions of the block matrices $\mb{\Psi}_n$ and $\mb{\wt Q}_n$ are given by (\ref{equ_Psi}) and (\ref{equ_tildeQ}) with replacing $\mb{W}^{(i)}$ with $\mb{W}^{(i+1)}$. Thus, we can rewrite the optimization problem (\ref{equ_optif}) as
\begin{align}
	\min_{\bm{\varTheta}^{(i+1)}}\quad  
	&\sum_{n=1}^{N}\mu_n \wh {f}_n(\bm{\varTheta}^{(i+1)})  \\
	~{\rm {s.t.}}~\quad     	    &(\ref{equ_WSR}\text{b}), \notag
\end{align}

Following a similar process as in Section~\ref{sec_PSM}, the optimal PSMs can be obtained through the Riemannian steepest descent algorithm. In the $(i + 1)$-th iteration, $\bm{\theta}^{(i+1)}$ can be updated by (\ref{equ_update_theta}) with  $\bm{\delta}^{(i)}=\operatorname{grad}_{\bm{\theta}^{(i)}}\wh{f}$, where ${\wh f}=\sum_{n=1}^{N}\mu_n {\wh f}_n$. The gradient $\operatorname{grad}_{\bm{\theta}}\wh{f}=\nabla \wh{f} -\operatorname{Re}\left\{\nabla \wh{f}\odot\bm{\theta}^\star\right\}\odot\bm{\theta}$ with $\nabla \wh{f}(\bm{\theta})=2\frac{d\wh{f}}{d\bm{\theta}^\star}$, where
\begin{equation} \notag
	\begin{aligned}
		\frac{d\wh{f}}{d\bm{\theta}^\star} 
		= \left[\frac{d\wh{f}}{d\bm{\theta}_{1}^\star},\dots,\frac{d\wh{f}}{d\bm{\theta}_{L_R}^\star}\right]^{\rm T}
		=\left[\frac{d\wh{f}}{d\mb{\Theta}_{1,1}^\star},\dots,\frac{d\wh{f}}{d\mb{\Theta}_{K,L_K}^\star}\right]^{\rm T},
	\end{aligned}
\end{equation}
where $\frac{d\wh{f}}{d\mb{\Theta}_{k,l_k}^\star}$ is given in the following proposition.
\begin{proposition}
	The derivative of $\wh{f}(\bm{\Theta})$ with respect to $\mb{\Theta}_{k,l_k}^\star$ can be expressed as $\frac{d\wh{f}}{d\mb{\Theta}_{k,l_k}^\star}=\sum_{n=1}^{N}\mu_n \frac{d\wh {f}_n(\mb{\Theta})}{d\mb{\Theta}_{k,l_k}^\star}$, where
	\begin{equation}
		\begin{aligned}
			\frac{d\wh {f}_n(\mb{\Theta})}{d\mb{\Theta}_{k,l_k}^\star} &=\operatorname{Tr}\left(\mb{A}_n\mb{V}_n\mb{A}_n^\dagger\mb{\wt Q}^\prime_n\right) -\sum_{l=1}^{L}\operatorname{Tr}\left(\sigma^2\mb{A}_{nl}\mb{V}_n[\mc{\wt B}^\prime_{l}(z)]_{nn}\right)\\
			&\qquad-\sum_{l=1}^{L}\operatorname{Tr}\left(\sigma^2\mb{V}_n\mb{A}^\dag_{nl}[\mc{\wt B}^\prime_{l}(z)]_{nn}\right)
		\end{aligned}
	\end{equation} 
	where we denote $\mb{\wt Q}_n^\prime$ as $\frac{d\mb{\wt Q}_n}{d\mb{\Theta}_{k,l_k}^\star}$ and $\mc{\wt B}_l^\prime(z)$ as $\frac{d\mc{\wt B}_l(z)}{d\mb{\Theta}_{k,l_k}^\star}$ for notation convenience. 
	The $(l,q)$-th block sub-matrices in $\mb{\wt Q}^\prime_n$ of dimension $T\times T$ are given by (\ref{equ_wtQP}), shown at the top of this page. The expression of $\mc{\wt B}_l^\prime(z)$ is given in Appendix~\ref{appen_derivative_II}.
\end{proposition}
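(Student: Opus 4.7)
The plan is to exploit linearity and apply the chain rule to each of the three trace terms in the asymptotic expression of $\wh f_n(\bm{\varTheta})$, then assemble the resulting pieces blockwise. First, since $\wh f=\sum_{n=1}^N \mu_n \wh f_n$ and $\mu_n$ is a constant, the differentiation and summation commute, so the task reduces to computing $d\wh f_n/d\mb{\Theta}_{k,l_k}^\star$ for a single index $n$; summing over $n$ with the weights $\mu_n$ then produces the outer sum in the claim.

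Next, I would differentiate the three trace terms of $\wh f_n(\bm{\varTheta})=\operatorname{Tr}(\mb{V}_n\mb{A}_n^\dagger\mb{\wt Q}_n\mb{A}_n)-\operatorname{Tr}(\mb{V}_n\mb{\Psi}_n^\dagger\mb{A}_n)-\operatorname{Tr}(\mb{V}_n\mb{A}_n^\dagger\mb{\Psi}_n)$ separately. The block structure of $\mb{\Psi}_n$ in (\ref{equ_Psi}) shows that the only $\bm{\varTheta}$-dependence of the $l$-th block $\{\mb{I}_{T_t}+\sigma^2\mc{\wt B}_l(z)\}_{nn}$ is through $\mc{\wt B}_l(z)$; hence $d\mb{\Psi}_n/d\mb{\Theta}_{k,l_k}^\star$ is again a block-column matrix whose $l$-th entry is $\sigma^2\{\mc{\wt B}_l^\prime(z)\}_{nn}$. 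Expanding the two $\mb{\Psi}_n$-traces as sums over $l$ and using the cyclic property of trace where needed then reproduces the second and third sums in the stated formula. For the first trace, $d\mb{\wt Q}_n/d\mb{\Theta}_{k,l_k}^\star\equiv \mb{\wt Q}_n^\prime$ gives $\operatorname{Tr}(\mb{V}_n\mb{A}_n^\dagger\mb{\wt Q}_n^\prime\mb{A}_n)$, which by cyclicity of trace equals $\operatorname{Tr}(\mb{A}_n\mb{V}_n\mb{A}_n^\dagger\mb{\wt Q}_n^\prime)$, the first term on the right-hand side.

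To justify (\ref{equ_wtQP}), I would apply the matrix product rule blockwise to the piecewise definition (\ref{equ_tildeQ}) of $\mb{\wt Q}_n$. The diagonal blocks $\{\mb{I}_{T_t}+\sigma^2\mc{\wt B}_l(z)\}_{nn}$ depend on a single $\mc{\wt B}_l(z)$, and since $d\mb{I}_{T_t}/d\mb{\Theta}_{k,l_k}^\star=\mb{0}$ they differentiate immediately to $\sigma^2\{\mc{\wt B}_l^\prime(z)\}_{nn}$. The off-diagonal blocks are products of two $(\mb{I}_{T_t}+\sigma^2\mc{\wt B}_\cdot(z))$-factors, so the Leibniz rule yields the symmetric two-term sum displayed in (\ref{equ_wtQP}). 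At this point, everything in the proposition is assembled up to the single unknown $\mc{\wt B}_l^\prime(z)$.

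The main obstacle is computing $\mc{\wt B}_l^\prime(z)$ itself, because $\mc{\wt B}_l(z)$ is defined only implicitly through the coupled fixed-point system in Proposition~\ref{prop_Cauchy}: its $\bm{\varTheta}$-dependence propagates through $\bm{\Upsilon}(z)$, $\mb{\Omega}(z)$, $\bm{\Gamma}(z)$, $\bm{\wt\Gamma}(z)$, $\bm{\wt\Upsilon}(z)$, and the four auxiliary Cauchy transforms $\mc{G}_{\mb{X}_1}$, $\mc{G}_{\mb{X}_{2k}}$, $\mc{G}_{\mb{X}_3}$, $\mc{G}_{\mb{X}_{4k}}$, all of which are mutually coupled. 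The strategy is to apply the identity $d(\mb{M}^{-1})/dx=-\mb{M}^{-1}(d\mb{M}/dx)\mb{M}^{-1}$ to $\mc{\wt B}_l(z)=(\bm{\Upsilon}(z)-\mb{\wh W}^\dagger\mb{\ob F}_l^\dagger\mb{\Omega}^{-1}(z)\mb{\ob F}_l\mb{\wh W})^{-1}$, then differentiate each of the fixed-point equations implicitly with respect to $\mb{\Theta}_{k,l_k}^\star$ to obtain a linear system in the unknown derivatives, whose solution provides the explicit form of $\mc{\wt B}_l^\prime(z)$ deferred to Appendix~\ref{appen_derivative_II}. Substituting this quantity back into the block-assembly above completes the proof.
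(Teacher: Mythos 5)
Your proposal is correct and follows essentially the same route as the paper: linearity over $n$, term-by-term Wirtinger differentiation of the three traces with the cyclic property, the blockwise product rule for (\ref{equ_tildeQ}) yielding (\ref{equ_wtQP}), and the identity $d(\mb{M}^{-1})=-\mb{M}^{-1}(d\mb{M})\mb{M}^{-1}$ combined with implicit differentiation of the fixed-point system to obtain $\mc{\wt B}_l^\prime(z)$ as in Appendix~\ref{appen_derivative_II}. The only point glossed over (by both you and the paper) is that matching the $\mb{\Psi}_n^\dagger$-term to $[\mc{\wt B}_l^\prime(z)]_{nn}$ implicitly uses the Hermiticity of $\mc{\wt B}_l(z)$ at $z=-\sigma^2$.
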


\subsection{Overall Algorithm} \label{subsec_AO}
In the previous subsections, the asymptotic expressions of the WCM $\bm{\mc A}$, TPMs $\bm{\mc W}$ and PSMs $\bm{\varTheta}$ are obtained based on the operator-valued Cauchy transform. The proposed asymptotic AO algorithm only depends on the statistical CSI, which significantly reduces the interactive overhead. Specifically, in each realization of the APs/UEs location (the statistical CSI remains unchanged), the APs only need to feedback the statistical CSI once at the beginning of AO algorithm.
The proposed asymptotic AO algorithm is summarized in~Algorithm~\ref{alg_AO}. 
The convergence and complexity analyses are given as follows.

\textbf{Complexity Analysis:}
Assuming that the number of iterations of the fixed point equations in Proposition~1 is $I_{1}$ and ignoring the non-leading order terms, the complexity of Proposition~\ref{prop_Cauchy} is $O\left(I_{1}L\left(T_t^3+L_{AR}^3+R^3\right)\right)$. Through the closed-form expressions of the WCM $\mb{A}$, the auxiliary weight matrix $\mb{V}$ and the TPMs $\mb{W}$, the complexity of updating $\mb{A}$, $\mb{V}$ and $\mb{W}$ is given by $O\left(N(L^3T^3+T_t^3)\right)$.
Assuming that the number of iterations of the Riemannian steepest descent algorithm and the gradient in the PSM sub-problem are respectively $I_{2}$ and $I_3$, the complexity of updating PSMs is $O\left(I_{2}\left(L_RL^3T^3+I_{3}L(T_t^3+L_{AR}^3+R^3)\right)\right)$. Therefore, assuming that the number of iterations for the AO algorithm is $I_0$, the complexity in total is given by $O\Big(I_0\big((I_1L+N+I_2I_3L)T_t^3+(I_1L+I_2I_3L)L_{AR}^3+(I_1L+I_2I_3L)R^3+(N+I_2L_R)L^3T^3\big)\Big)$.

\textbf{Convergence Analysis:}
In Algorithm \ref{prop_Cauchy}, the block coordinate descent method is adopted, where four blocks, i.e., $\{\mb{A},\mb{V},\mb{W},\mb{\Theta}\}$ are optimized alternately, while keeping the other three blocks of variables fixed. We note that the variables $\{\mb{A},\mb{V},\mb{W}\}$ are updated with optimality in each iteration. In addition, since the Riemannian steepest descent algorithm is utilized in the optimization of the PSMs $\mb{\Theta}$, the search direction in the Riemannian manifold is ensured to be a descent direction. Therefore, the objective value of the weighted sum rate problem is non-decreasing. Since the weighted sum rate is upper-bounded by a finite value, the proposed AO algorithm is guaranteed to converge.

\RestyleAlgo{ruled}
\begin{algorithm}[h]
	\caption{Asymptotic AO Algorithm}\label{alg_AO}
	\renewcommand{\KwResult}{\textbf{Initiation: }}
	\KwIn{Channel statistics $\left\{\overline{\mb{F}},\mb{T},\mb{P},\mb{M},\mb{\ob G},\mb{R},\mb{C},\mb{N}\right\}$.}
	\KwOut{WCM $\bm{\mc A}$, TPMs $\bm{\mc W}$ and PSMs $\bm{\varTheta}$.}
	\KwResult{Maximum iteration number $I_{\rm max}$ and $i=0$. The UEs' TPM $\mb{W}^{0}_{n}=\sqrt{\frac{p_n}{T}}\mb{I}_{T},\forall n$. The RISs' PSM $\mb{\Theta}^{0}_{k}=\mb{I}_{L_k},\forall k$. The APs' local detector $\mb{U}$ based on $\mb{W}^{0}$ and $\mb{\Theta}^{0}$.} \\
	\Repeat{$i\geq I_{\rm max}$}
	{	
		Update the WCM $\mb{A}^{(i+1)}$ with $\mb{W}^{(i)}$ and $\mb{\Theta}^{(i)}$ via (\ref{equ_A}).\\
		Update the weight matrix $\mb{V}^{(i+1)}$  with $\mb{A}^{(i+1)}$, $\mb{W}^{(i)}$ and $\mb{\Theta}^{(i)}$ via (\ref{equ_optV}).\\
		Update UEs' TPMs $\mb{W}^{(i+1)}$ with $\mb{V}^{(i+1)}$, $\mb{A}^{(i+1)}$ and $\mb{\Theta}^{(i)}$ via (\ref{equ_W}).\\
		Update RISs' PSMs $\mb{\Theta}^{(i+1)}$ with $\mb{A}^{(i+1)}$, $\mb{V}^{(i+1)}$ and $\mb{W}^{(i+1)}$ via (\ref{equ_update_theta}).\\
		$i=i+1$
	}
\end{algorithm}

\section{Numerical Results} \label{sec_Result}
In this section, we provide comprehensive simulation results to verify the effectiveness of the proposed AO algorithm and the accuracy of the asymptotic AO algorithm based on the operator-valued free probability. 
\setcounter{equation}{52}
\subsection{Simulation Setup}
We assume that the APs and UEs are randomly distributed within a $1\times 1\text{km}^2$ square area. Following a similar setup as in~\cite{zhang2023performance}, the RISs are assumed to be randomly deployed inside circles of a radius of $10$m centered by the UEs. The transmit power of each UE is set to $p_k=23\text{dBm}$ and the noise power at each AP is $\sigma^2=-94\text{dBm}$. Without loss of generality, the priority $\mu_k$ are set equally for all UEs\footnote{The priorities can be adjusted to meet different system requirements, such as quality of service (QoS) or fairness among UEs.}. 
Following the argument in~\cite{dovelos2020massive}, the full correlation matrix of the direct channel between the $l$-th AP and the $n$-th UE is
\begin{equation} \notag
	\begin{aligned}
		\bm{\mc{C}}_{0,nl}& \triangleq\mathbb{E}\left\{\mathrm{vec}(\widetilde{\mb{F}}_{0,nl})\mathrm{vec}(\widetilde{\mb{F}}_{0,nl})^\dagger\right\}  \\
		&=(\mb{P}_{0,nl}\otimes\mb{T}_{0,nl})\mathrm{diag}(\mathrm{vec}(\mb{\wt M}_{0,nl}))(\mb{P}_{0,nl}\otimes\mb{T}_{0,nl})^\dagger,
	\end{aligned}
\end{equation}
where $\mb{\wt M}_{0,nl} = \mb{M}_{0,nl}\otimes\mb{M}_{0,nl}$ is the eigenmode coupling matrix. In the direct channel (\ref{equ_F0}), the Rician factor $\kappa_{nl}$ and the large-scale fading coefficient $\beta_{nl}$ are incorporated into the channel model through the constraints $\lVert\overline{\mb{F}}_{0,nl}\rVert_F^2=R\beta_{nl}\frac{\kappa_{nl}}{\kappa_{nl}+1}$ and $\operatorname{Tr}(\bm{\mc{C}}_{0,nl})=R\beta_{nl}\frac{1}{\kappa_{nl}+1}$. A similar normalization can be also applied to the links between APs and RISs and between RISs and UEs. 
Consider the uniform planar array (UPA) equipped with all nodes, the LoS channel model between the $n$-th UE and the $l$-th AP is given as follows
\begin{align}
	\mb{\ob F}_{0,nl} = \mb{a}_{r}(\theta_{nl}^{r},\phi_{nl}^{r})\mb{a}_{t}(\theta_{nl}^{t},\phi_{nl}^{t})^\dagger,
\end{align}
where $\theta_{nl}^{r}(\phi_{nl}^{r})$ and $\theta_{nl}^{t}(\phi_{nl}^{t})$ are respectively the azimuth (elevation) angles of arrival and departure (AoA and AoD) of the LoS path between the $n$-th UE and the $l$-th AP. The receive and transmit array response vectors are  $\mb{a}_{r}(\theta,\phi)=\mb{a}_{rh}(\theta,\phi)\otimes\mb{a}_{rv}(\phi)$ and  $\mb{a}_{t}(\theta,\phi)=\mb{a}_{th}(\theta,\phi)\otimes\mb{a}_{tv}(\phi)$, respectively, where
\begin{align}
	\mb{a}_{rh}(\theta,\phi)&=\left[1,e^{j2\pi\delta\sin(\phi)\sin(\theta)},\dots,e^{j2\pi(N^R_H-1)\delta\sin(\phi)\sin(\theta)}\right]^{\rm T}, \notag\\
	\mb{a}_{rv}(\phi)&=\left[1,e^{j2\pi\delta\cos(\phi)},\dots,e^{j2\pi(N^R_V-1)\delta\cos(\phi)}\right]^{\rm T},\notag
\end{align}
where $R=N^R_HN^R_V$ with $N^R_H$ and $N^R_V$ being the number of antennas in horizontal dimension and vertical dimension at receiving sides and $\delta$ is the normalized antenna spacing. The transmit array response vector has a similar structure as $\mb{a}_{r}(\theta,\phi)$ and is omitted. The LoS connections between AP and RIS, and RIS and UE can be modeled with a similar approach.

 The large-scale fading coefficients of the direct channels between every two devices can be expressed as $\beta=-30-\alpha\operatorname{log}(d)$, where $\alpha$ is the path loss exponent and $d$ is the distance between two devices. As discussed in \cite{zhang2022active}, the direct links between APs and UEs are weak due to severe obstruction, while the links between APs and RISs and between RISs and UEs are strong since the RISs are universally deployed in high-rise building facades. Thus, the path loss exponents of AP-UE, AP-RIS and RIS-UE channels are set as 3.8, 2 and 2.2, respectively. In addition, unless otherwise stated, the Rician factors of the direct links between AP-UE, AP-RIS and RIS-UE are set as 3, 10 and 10, respectively. Furthermore, the correlation matrices and variance profiles of the channel are generated randomly but fixed in each Monte Carlo simulation. In each case, we annotate 'opt' on the superscript of variables to indicate the optimized variables. On the other hand, the unoptimized variables are denoted as $\mb{I}$, such as $\mb{A}=\mb{I}$, $\mb{W}=\mb{I}$ and $\mb{\Theta}=\mb{I}$ respectively denote that the WCM, TPMs and PSMs are set as $\mb{A}_{nl}=\frac{1}{L}\mb{I}_{T}$, $\mb{W}_n=\sqrt{\frac{p_n}{T}}\mb{I}_{T}$ and $\mb{\Theta}_l=\mb{I}_{L_k}$. Unless otherwise stated, the WCM $\mb{A}$ is optimized by default and the notation $\mb{A}=\mb{A}^{\rm opt}$ may be omitted.
 
To demonstrate the effectiveness of the proposed algorithm, we consider the following benchmarks
 \begin{itemize}
 	\item \textbf{Fully centralized processing (FCP):} The design of combining and precoding matrices is performed at the CPU with instantaneous CSI transmitted from APs~\cite{wang2023uplink}. With RIS-involved cases, the RIS's phase-shifts can be updated by the gradient descent algorithm.
 	\item \textbf{LSFD-MMSE/MR:} Each AP first detect the uplink signals with MMSE/MR detector based on local instantaneous CSI, and the CPU performs LSFD detection with expectations of channel matrices~\cite{wang2023uplink}.
 \end{itemize}

\subsection{Simulation Results}

	In Fig.~\ref{Fig_Accuracy}, we first demonstrate the accuracy of asymptotic expressions of the WCM, TPMs and PSMs in the asymptotic AO algorithm in different system setups. The solid lines represent the theoretical results obtained based on the asymptotic AO algorithm in Section~\ref{sec_asymAO} and the markers represent Monte Carlo simulation results based on the AO algorithm in Section~\ref{sec_AO}. As shown in Fig.~\ref{Fig_Accuracy}, the theoretical results fit the Monte Carlo simulation results accurately regardless of joint optimization or separate optimization of each variable, which illustrates the accuracy of the proposed algorithm. In addition, Fig.~\ref{Fig_Accuracy} shows that the sum-rate monotonously increases along with the number of APs, which is achieved by the macro-diversity gain. Note that the performance gain between the optimized cases (blue line) and the unoptimized cases (purple line) becomes large when the number of transmitting antennas and RIS's reflecting antennas grows, which indicates the effectiveness of the proposed algorithm.

The convergence of the proposed AO algorithm is illustrated in Fig.~\ref{Fig_Convergence}, where the numbers of UEs are set as $4$, $8$ and $12$, respectively. In each case, the sum rate increases with the number of iterations and converges to the maximum value within $5$ iterations, which shows a rapid rate of convergence for the proposed algorithm. Furthermore, compared to the initial state, where the WCM $\mb{A}$, TPMs $\mb{W}$ and PSMs $\mb{\Theta}$ are set to be identity matrices, the converged sum rate has increased by $74\%$, $48\%$ and $32\%$ at $N=4$, $N=8$ and $N=12$, respectively, which demonstrates that the proposed AO algorithm has great effectiveness to improve the sum rate. On the other hand, the theoretical results are consistent with simulation results in each iteration, which further demonstrates the accuracy of the proposed asymptotic AO algorithm.

\begin{figure}[tb] 
	\centerline{\includegraphics[width=0.8\columnwidth]{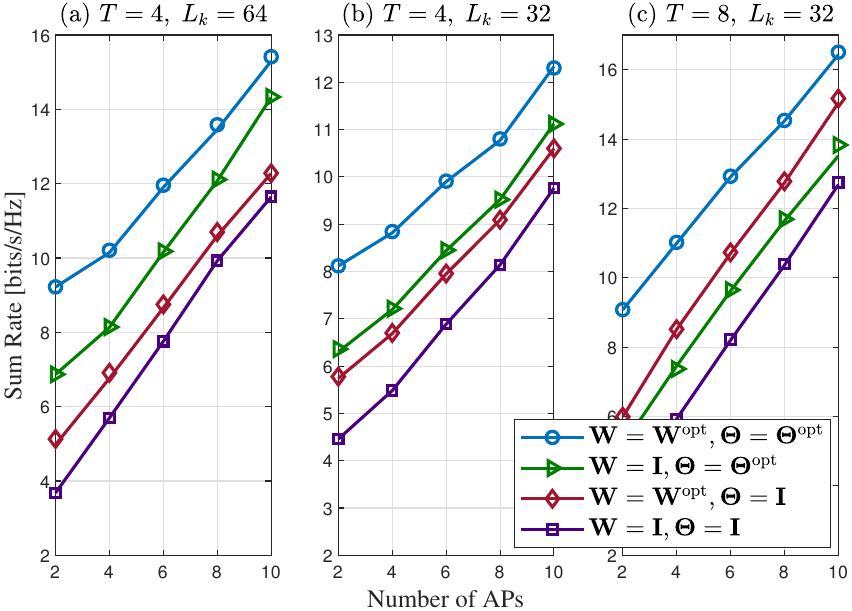}}
	\caption{Sum-rate versus the number of APs with varying numbers of RIS's passive reflecting antennas and transmitting antennas per UE, where $N=2$, $K=2$ and $R=8$. The solid lines represent theoretical results and the markers represent Monte Carlo simulation results.}
	\label{Fig_Accuracy}
\end{figure}

\begin{figure}[tb] 
	\centerline{\includegraphics[width=0.8\columnwidth]{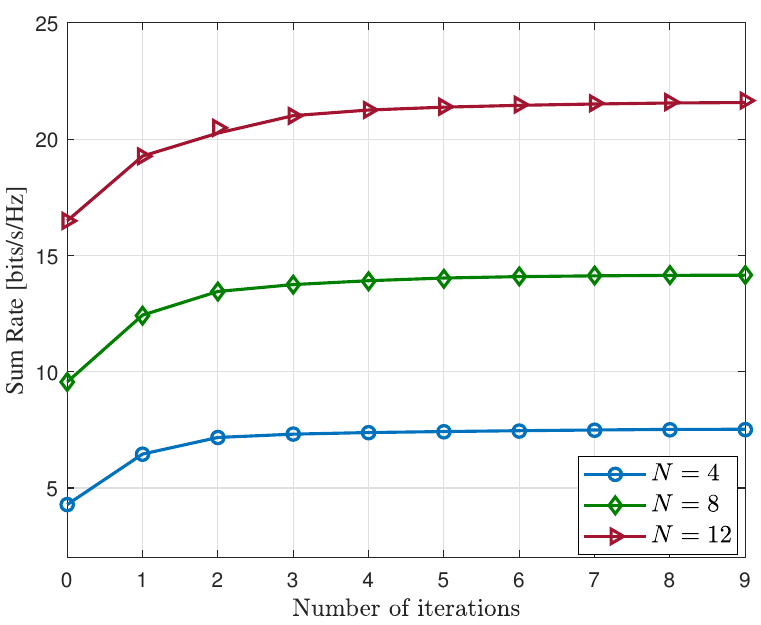}}
	\caption{Sum-rate versus the number of iterations with varying numbers of UEs, where $L=4$, $K=2$, $R=8$, $T=4$ and $L_k=32$ for $k=1,\dots,K$. The solid lines represent theoretical results and the markers represent Monte Carlo simulation results.}
	\label{Fig_Convergence}
\end{figure}

\begin{figure}[tb] 
	\centerline{\includegraphics[width=0.8\columnwidth]{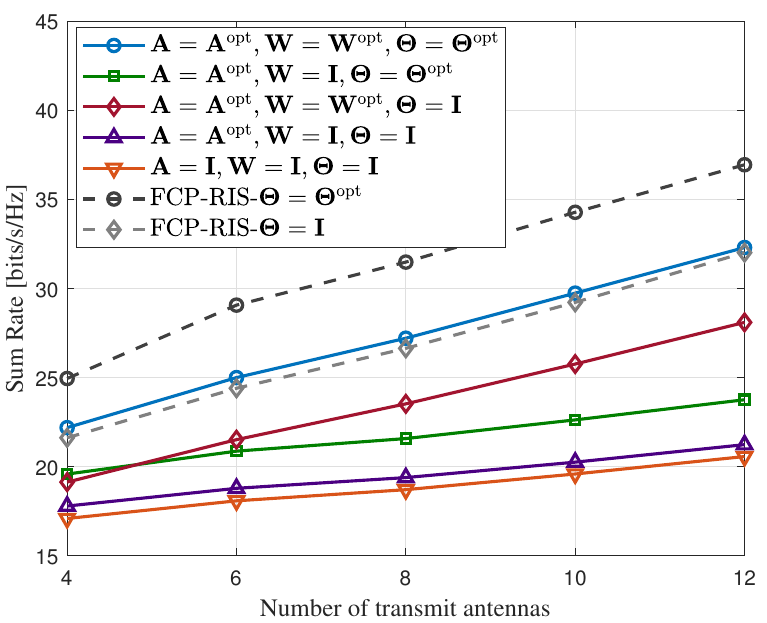}}
	\caption{Sum-rate versus the number of transmitting antennas, where $N=4$, $L=8$, $K=4$, $R=8$ and $L_k=32$ for $k=1,\dots,K$.}
	\label{Fig_NTx}
\end{figure}

To further illustrate the effectiveness of the proposed AO algorithm, we evaluate the performance of independent variable optimization within the AO algorithm in~Fig.~\ref{Fig_NTx}. Specifically, we consider the following cases: 1) Optimized case; 2) Unoptimized case; 3) Partial optimization case, i.e., only one or two variables are optimized. As shown in Fig.~\ref{Fig_NTx}, compared to the unoptimized case, the separate optimization of the WCM, TPMs and PSMs can efficiently improve the sum rate. 
In addition, note that the cases $\mb{A}=\mb{A}^{\rm opt}$, $\mb{W}=\mb{I}$, $\mb{\Theta}=\mb{I}$ and $\mb{A}=\mb{I}$, $\mb{W}=\mb{I}$, $\mb{\Theta}=\mb{I}$ respectively correspond to the LSFD-MMSE and LSFD-MR in~\cite{wang2022uplink}, but with RISs deployed. Compared to LSFD-MR, the proposed AO algorithm can improve the sum rate by at least $30.1\%$. Besides, compared to LSFD-MMSE, the individual optimization of the TPMs or the PSMs can improve the sum rate by at least $7.6\%$ and $10.2\%$, respectively, and the joint optimization of TPMs and PSMs can improve the sum-rate by at least $24.8\%$, which demonstrate the superiority of the proposed AO algorithm. Furthermore, the FCP cases with both optimized and unoptimized RIS's phase-shifts have a maximum performance gain of $14.3\%$ compared to the proposed AO algorithm, which concludes that the proposed AO algorithm can achieve comparable performance with lower interaction overhead.

\begin{figure}[tb] 
	\centerline{\includegraphics[width=0.8\columnwidth]{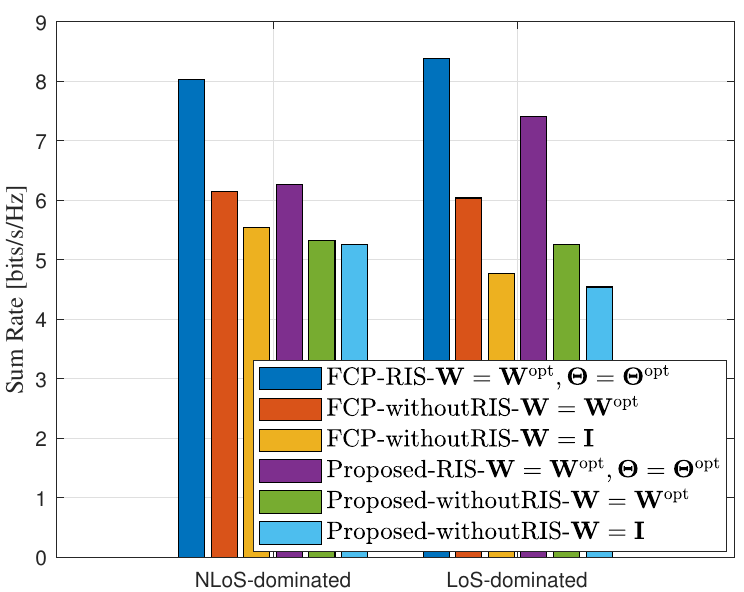}}
	\caption{Sum-rate comparison in the NLoS-dominated environment ($\kappa=0$) and the LoS-dominated environment ($\kappa=10$), where $N=4$, $L=4$, $K=2$, $R=4$, $T=2$ and $L_k=32$ for $k=1,\dots,K$.}
	\label{Fig_Rician}
\end{figure}

\begin{figure}[tb] 
	\centerline{\includegraphics[width=0.8\columnwidth]{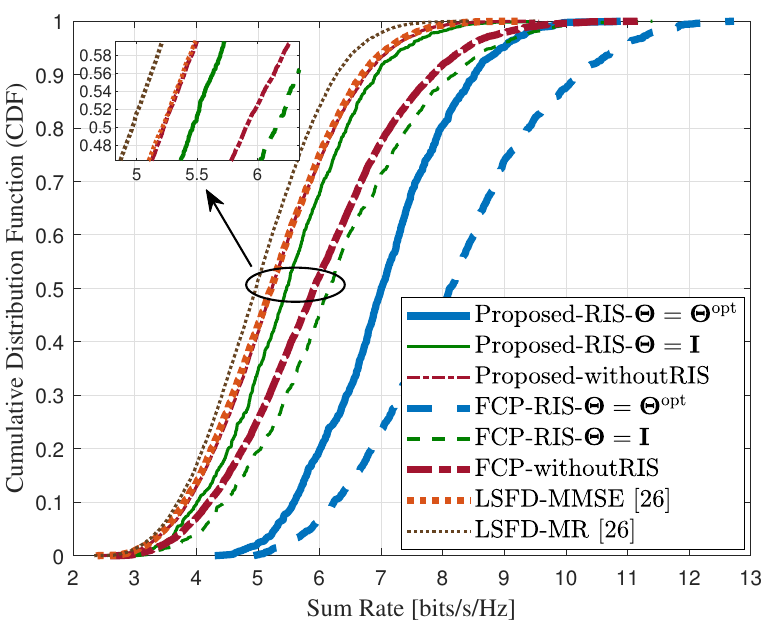}}
	\caption{Comparison with FCP and LSFD-MMSE/MR schemes in~\cite{wang2023uplink} with $N=4$, $L=4$, $R=4$ and $T=2$. In the RIS-assisted scenarios, the number of RISs and the passive reflecting elements per RISs are set as $K=2$ and $L_k=32$ for $k=1,\dots,K$.}
	\label{Fig_Compare}
\end{figure}

To demonstrate the benefit of the deployment of RISs in both the NLoS-dominated and the LoS-dominated environments, we compare the performances of the proposed algorithm and the FCP method with/without RIS deployment in the NLoS-dominated environment ($\kappa=0$) and the LoS-dominated environment ($\kappa=10$), as shown in Fig.~\ref{Fig_Rician}. In the NLoS-dominated environment, the proposed algorithm and the FCP method with optimized RIS can respectively achieve $18\%$ and $30\%$ performance gains than that without RIS deployment. In the LoS-dominated environment, the proposed algorithm and the FCP method with optimized RIS can respectively achieve $41\%$ and $40\%$ performance gains than that without RIS deployment. The results illustrate that the deployment of RISs can effectively improve the system performance in both the NLoS-dominated and the LoS-dominated environments. In addition, we note that the sum rate will reduce with the increase of the Rician factor in conventional CF networks without RIS deployment. However, the performance loss can be alleviated by deploying RISs or optimizing the transmitting precoder, which demonstrates the superiority of the proposed algorithm.


In Fig.~\ref{Fig_Compare}, we compare the proposed AO algorithm with the benchmarks in terms of cumulative distribution function (CDF) of the sum rate. 
The FCP with optimized RIS's phase-shifts has an average $16\%$ performance gain than that of the proposed algorithm. Furthermore, compared to the conventional CF networks, the FCP algorithm and proposed algorithm can respectively achieve $37\%$ and $34\%$ performance gains by deploying RISs. In addition, the proposed algorithm shows similar performance as the LSFD-MMSE scheme and has about $5\%$ performance gain than the LSFD-MR scheme. However, it should be mentioned that, in the LSFD-MMSE/MR schemes, the APs need to feedback the statistical information in each iteration of the AO algorithm to jointly design the combining matrix and the precoding matrix. Nonetheless, in the proposed AO algorithm, the APs only need to feedback the statistical CSI to the CPU once for each realization of the APs/UEs location, which significantly reduces the interaction overhead.  

\section{Conclusion} \label{sec_conclusion}
In this paper, we focus on the design of transceivers and phase-shifting matrices in multi-RIS-assisted CF networks. Specifically, the APs jointly serve multiple ground UEs with the assistance of multiple RISs and connect to the CPU to perform joint detection. To reduce the interaction overhead as well as exploit the macro-diversity, we propose a two-layer distributed reception scheme comprising the local MMSE detection at APs and weighted combining at the CPU. Aiming to maximize the weighted sum rate, we adopt the WMMSE framework to jointly design the WCM, TPMs and PSMs. Furthermore, considering the capacity-limited wireless fronthaul links, we resort to the operator-valued free probability to obtain the asymptotic expressions of the WCM, TPMs and PSMs. Since the asymptotic AO algorithm only depends on the statistical CSI, the APs only need to send the statistical CSI to the CPU once for the joint optimization, which greatly reduces the interaction overhead. Numerical results show that compared to the traditional CF networks without RIS deployment, the proposed AO algorithm can significantly improve the system performance as well as reduce the interaction overhead.


\begin{appendices}
\vspace{-0.2cm}

\section{Preliminary on Operator-Valued Free Probability Theory}
Let $\mc A$ be a unital algebra. A non-commutative probability space is defined as $(\mc A, \phi)$ consisting of $\mc A$ and a linear functional $\phi:\mc{A}\to \mbb{C}$. The elements of a non-commutative probability space are called non-commutative random variables.
Let $\mc{B}\subset \mc{A}$ be a unital subalgebra. For $\mc{H}\in \mc{A}$, a linear map $\mbb{E}_{\mc B}[\mc H]:\mc{A}\to\mc{B}$ is a $\mc{B}$-valued conditional expectation, if $\mbb{E}_{\mc B}[b]=b$ for all $b\in \mc{B}$ and $\mbb{E}_{\mc B}[b_1\mc{H}b_2]=b_1\mbb{E}_{\mc B}[\mc{H}]b_2$ for all $\mc{H}\in \mc{A}$ and $b_1,b_2\in \mc{B}$. Then, a $\mc{B}$-valued probability space is denoted as $(\mc{A},\mbb{E}_{\mc B},\mc{B})$, consisting of $\mc{B}\subset \mc{A}$ and the $\mc{B}$-valued conditional expectation $\mbb{E}_{\mc B}$. The elements of a $\mc B$-valued probability space are called $\mc B$-valued random variables.

Let $\mc{A}_1,\ldots,\mc{A}_K$ be the subalgebras of $\mc{A}$ with $\mc{B}\subset\mc{A}_k$ for all $1\le k\le K$. We also let $\{\bm{\mc{H}}_k\in\mc{A}_k, 1\le k\le K\}$ denote a family of operator-valued random variables, which are free with amalgamation over $\mc{B}$ according to the following definition.
\begin{definition}\label{defConvention}
	Let $n$ be an arbitrary integer. The families of random variables $\{\bm{\mc{H}}_1,\ldots,\bm{\mc{H}}_K\}$ are  free with amalgamation over $\mc{B}$, if for every family of index $\{k_1,\ldots,k_n\}\subset\{1,\ldots,K\}$ with $k_1\neq k_2$, \ldots, $k_{n-1}\neq k_n$, and every family of polynomials $\{P_1,\ldots,P_n\}$ satisfying $\mbb{E}_\mc{B}[P_j(\bm{\mc{H}}_{k_j})] = 0$, $j\in\{1,\ldots,n\}$, we have $\mbb{E}_{\mc{B}}\left[\prod_{j=1}^{n} P_j(\bm{\mc{H}}_{k_j})\right] = 0$.
\end{definition}

For a self-adjoint random variable $\mc{H}\in \mc{A}$ and $b\in\mbb{H}_{+}(\mc{B})$, where $\mbb{H}_{+}(\mc{B})$ is the operator upper half plane, defined by $\mbb{H}_{+}(\mc{B})=\left\{b\in \mc{B}:\Im(b)>0\right\}$. Then, the $\mc{B}$-valued Cauchy transform $\mc{G}^{\mc B}_{\mc H}(b)$ is defined as
\begin{equation}
	\mc{G}^{\mc B}_{\mc H}(b)=\mbb{E}_{\mc B}\left\{(b-\mc{H})^{-1}\right\}.
\end{equation}

Defining the operator lower half plane $\mbb{H}_{-}(\mc{B})$ as $\mbb{H}_{-}(\mc{B})=\left\{b\in \mc{B}:\Im(b)<0\right\}$. Then $\mc{B}$-valued $R$-transform of $\mc H$ is defined as
\begin{equation}
	R_{\mc H}^{\mc{B}}(b)=\sum_{n\geq 0}\kappa_{n+1}^{\mc{B}}\left({\mc H}b,\dots,{\mc H}b,{\mc H}b,{\mc H}\right),
\end{equation}
where $\kappa_{n}^{\mc{B}}$ is the $\mc{B}$-valued cumulants, which can be obtained from
the $\mc{B}$-valued moments.

The relation between $\mc{B}$-valued Cauchy transform and $R$-transform is given by
\begin{equation}
	R_{\mc H}^{\mc{B}}(b)=(\mc{G}^{\mc B}_{\mc H}(b))^{-1}-b^{-1},
\end{equation}
where $(\mc{G}^{\mc B}_{\mc H}(b))^{-1}: \mbb{H}_{-}(\mc{B})\to \mbb{H}_{+}(\mc{B})$ denotes the inverse function.

\section{Proof of Proposition~\ref{prop_Cauchy}} \label{appen_Cauchy}

Let $\mc{M} = \mb{M}_n(\mc{C})$ denote the algebra of $n\times n$ complex random matrices, where $n=T_t+R+2L_{AR}$. Consider the sub-algebra $\mc{D}\subset\mc{M}$ as the $n\times n$ block diagonal matrix. For $\mb{X}\in \mc{M}$, we define the linear map $\mbb{E}_{\mc D}[\mb{X}]:\mc{M}\to\mc{D}$. Then, we can define an operator-valued probability space $(\mc{M}, \mbb{E}_{\mc D}, \mc{D})$. For the $\mc{M}$-valued random variable $\mb{L}_{\mb{B}_l}\in (\mc{M},\mbb{E}_{\mc D},\mc{D})$, its $\mc{D}$-valued Cauchy transform is defined as
\begin{equation} \label{equ_GBl}
	\mc{G}_{\mb{L}_{\mb{B}_l}}^{\mc{D}}(\mb{\Lambda}(z)) = \mbb{E}_{\mc{D}}\left[\left(\mb{\Lambda}(z) - \mb{L}_{\mb{B}_l}\right)^{-1}\right],
\end{equation}
where $\mb{\Lambda}(z)\in\mc{M}$. 

To obtain the Cauchy transform, we divide the derivation into the following four steps. Recalling that the Gram matrix $\mb{B}=\mb{\wh {W}}^\dag\mb{F}^\dagger\mb{\wh{\Theta}}\mb{G}^\dagger\mb{G}\mb{\wh{\Theta}}\mb{F}\mb{\wh {W}}$ in the considered problem, where the subscripts are omitted. Note that both $\mb{F}$ and $\mb{G}$ are non-central and with non-trivial spatial correlations, and thus, are not free over the complex algebra in the classic free probability aspect. Thus, we construct a linearization matrix in \textbf{\textit{Step}~1}, whose operator-valued Cauchy transform is related to the Cauchy transform of $\mb{B}$. In specific, the equivalence relationship between the constructed operator-valued Cauchy transform and the Cauchy transform $\mc{G}_{\mb{\Xi},\mb{B}_{l}}(z)$ is established in \textbf{\textit{Step}~2}. In addition, the explicit expression of the constructed operator-valued Cauchy transform is obtained in \textbf{\textit{Step}~3}. Then, based on the equivalence relationship in \textbf{\textit{Step}~2}, the Cauchy transform $\mc{G}_{\mb{\Xi},\mb{B}_{l}}(z)$ is obtained in \textbf{\textit{Step}~4}.

\textbf{\textit{Step}~1: Linearization matrix construction}: Based on the Anderson’s linearization trick~\cite{belinschi2017analytic}, the random matrix $\mb{L}_{\mb{B}_l}$ of size $n\times n$ is
\begin{align} \label{equ_LB}
	\mb{L}_{\mb{B}_l}=
	\left[\begin{array}{cccc}
		\mb{0}_{T_t} & \mb{0}_{T_t\times L_{AR}} & \mb{0}_{T_t\times R} & \mb{\wh W}^\dagger \mb{F}_l^\dagger\\
		\mb{0}_{L_{AR}\times T_t} & \mb{0}_{L_{AR}} & \mb{\wh \Theta}^\dagger\mb{G}_l^{\dagger} & -\mb{I}_{L_{AR}}\\
		\mb{0}_{R\times T_t} & \mb{G}_l\mb{\wh \Theta} & -\mb{I}_{R} & \mb{0}_{R\times L_{AR}}\\
		\mb{F}_l\mb{\wh W} & -\mb{I}_{L_{AR}} & \mb{0}_{L_{AR}\times R} & \mb{0}_{L_{AR}}
	\end{array}\right].
\end{align}

The linearization matrix $\mb{L}_{\mb{B}_l}$ into the deterministic and random parts, i.e., $\mb{L}_{\mb{B}_l}=\ob{\mb{L}}_{\mb{B}_l}+\mb{\wt L}_{\mb{B}_l}$, whose expressions are given as follows
\begin{align}
	\mb{\wt L}_{\mb{B}_l} &=
	\left[\begin{array}{cccc}
		\mb{0} & \mb{0} & \mb{0} & \mb{\wh W}^\dagger \mb{\wt F}_l^\dagger\\
		\mb{0} & \mb{0} & \mb{\wh \Theta}^\dagger\mb{\wt G}_l^{\dagger} & \mb{0}\\
		\mb{0} & \mb{\wt G}_l\mb{\wh \Theta} & \mb{0} & \mb{0}\\
		\mb{\wt F}_l\mb{\wh W} & \mb{0} & \mb{0} & \mb{0}
	\end{array}\right], \\
	\mb{\ob L}_{\mb{B}_l} &=
	\left[\begin{array}{cccc}
		\mb{0} & \mb{0} & \mb{0} & \mb{\wh W}^\dagger \mb{\ob F}_l^\dagger\\
		\mb{0} & \mb{0} & \mb{\wh \Theta}^\dagger\mb{\ob G}_l^{\dagger} & -\mb{I}\\
		\mb{0} & \mb{\ob G}_l\mb{\wh \Theta} & -\mb{I} & \mb{0}\\
		\mb{\ob F}_l\mb{\wh W} & -\mb{I} & \mb{0} & \mb{0}
	\end{array}\right] \label{equ_barL},
\end{align}
where 
$\mb{\ob G}_l = [\mb{I}_{R},\mb{\ob G}_{1l},\dots,\mb{\ob G}_{Kl}]$,
$\mb{\wt G}_l = [\mb{0}_{R},\mb{\wt G}_{1l},\dots,\mb{\wt G}_{Kl}]$,
$\mb{\ob F}_l = [\mb{\ob F}_{0l}^\dag,\mb{\ob F}_1^\dag,\dots,\mb{\ob F}_K^\dag]^\dag$, 
$\mb{\wt F}_l = [\mb{\wt F}_{0l}^\dag,\mb{\wt F}_1^\dag,\dots,\mb{\wt F}_K^\dag]^\dag$, 
with $\mb{\wt F}_{0l}=[\mb{\wt F}_{0,1l},\dots,\mb{\wt F}_{0,Nl}]$, $\mb{\ob F}_{0l}=[\mb{\ob F}_{0,1l},\dots,\mb{\ob F}_{0,Nl}]$, $\mb{\wt F}_k=[\mb{\wt F}_{1k},\dots,\mb{\wt F}_{Nk}]$ and $\mb{\ob F}_k=[\mb{\ob F}_{1k},\dots,\mb{\ob F}_{Nk}]$. 

\textbf{\textit{Step}~2: Construct operator-valued Cauchy transform and establish equivalence relations}: The expectation $\mbb{E}_{\mc{D}}[\mb X]$ in (\ref{equ_GBl}) is defined as
\begin{equation} \label{equ_EDX}
	\mbb{E}_\mc{D}\left[\mb{X}\right] \!=\! 
	\left[\begin{array}{c:c:c:c}
		\mbb{E}\left[\mb{X}_1\right] & & & \\\hdashline
		& \mbb{E}\left[\mb{X}_2\right] & & \\\hdashline
		& & \mbb{E}\left[\mb{X}_3\right] & \\\hdashline
		& & &  \mbb{E}[\mb{X}_4]
	\end{array}\right],
\end{equation}
where $\mb{X}_1$ is a $T_t\times T_t$ sub-matrix and $\mb{X}_3$ is a $R\times R$ sub-matrix. The $L_{AR}\times L_{AR}$ block diagonal matrices $\mb{X}_2=\operatorname{blkdiag}\left\{\mb{X}_{20},\mb{X}_{21},\dots,\mb{X}_{2K}\right\}$ and $\mb{X}_4=\operatorname{blkdiag}\left\{\mb{X}_{40},\mb{X}_{41},\dots,\mb{X}_{4K}\right\}$, where $\mb{X}_{2k}$ and $\mb{X}_{4k}$ are $L_k\times L_k$ sub-matrices with $L_0=R$. Note that $\mc{G}_{\mb{L}_{\mb{B}_l}}^{\mc{D}}(\mb{\Lambda}_n(z))\in \mc{D}$, by the same matrix partitioning as in (\ref{equ_EDX}), we can define $\mc{G}_{\mb{L}_{\mb{B}_l}}^{\mc{D}}(\mb{\Lambda}_n(z))$ as 
\begin{equation} \label{equ_P_GBL}
	\mc{G}_{\mb{L}_{\mb{B}_l}}^{\mc{D}}(\mb{\Lambda}(z))=\operatorname{blkdiag}\left\{\mc{G}_{\mb{X}_1}(z),\mc{G}_{\mb{X}_2}(z),\mc{G}_{\mb{X}_3}(z),\mc{G}_{\mb{X}_4}(z)\right\},
\end{equation}
where $\mc{G}_{\mb{X}_2}(z)=\operatorname{blkdiag}\left\{\mb{0}_R,\mc{G}_{\mb{X}_{21}}(z),\dots,\mc{G}_{\mb{X}_{2K}}(z)\right\}$ and $\mc{G}_{\mb{X}_4}(z)=\operatorname{blkdiag}\left\{\mc{G}_{\mb{X}_{40}}(z),\mc{G}_{\mb{X}_{41}}(z),\dots,\mc{G}_{\mb{X}_{4K}}(z)\right\}$.

The matrix function $ \mb{\Lambda}(z) $ is denoted as
\begin{align} \label{equ_lambdaz}
	\mb{\Lambda}(z) = \begin{bmatrix}
		z\mb{I}_{T_t} & \mb{0}_{T_t\times (R+2L_{AR})}\\
		\mb{0}_{(R+2L_{AR})\times T_t} & \mb{0}_{(R+2L_{AR})\times (R+2L_{AR})} 
	\end{bmatrix}.
\end{align} 

By substituting (\ref{equ_LB}) and (\ref{equ_lambdaz}) into (\ref{equ_GBl}) and applying the matrix inversion lemma, we 
note that the upper-left $T_t\times T_t$ matrix block of (\ref{equ_GBl}) is equivalent to $(z\mb{I}_{T_t}-\mb{B}_l)^{-1}$ in $\mc{G}_{\mb{\Xi},\mb{B}_{l}}(z)$, thus we have
\begin{equation} \label{equ_eq_G}
	\mc{G}_{\mb{\Xi},\mb{B}_{l}}(z) = \frac{1}{T_t}\mathrm{Tr}\left(\mb{\Xi}\left\{\mc{G}_{\mb{L}_{\mb{B}_l}}^{\mc{D}}(\mb{\Lambda}(z))\right\}^{(1,1)}\right),
\end{equation}
where $\left\{\cdot\right\}^{(1,1)}$ denotes the upper-left $T_t\times T_t$ matrix block. Thus, to find the explicit expression of the Cauchy transform $\mc{G}_{\mb{\Xi},\mb{B}_{l}}(z)$ is amount to finding the operator-valued Cauchy transform $\mc{G}_{\mb{L}_{\mb{B}_l}}^{\mc{D}}(\mb{\Lambda}(z))$. 

\textbf{\textit{Step}~3: Calculate the operator-valued Cauchy transform}:
To describe the correlation of the channel coefficients, we define the one-sided correlation functions of $\widetilde{\mb{G}}_{kl}$ as 
\begin{align}
	\eta_{kl}(\widetilde{\mb{D}}) &= \mbb{E}[\widetilde{\mb{G}}_{kl}^\dagger\widetilde{\mb{D}} \widetilde{\mb{G}}_{kl}] = \frac{1}{L_k} {\mb{C}}_{kl}\mb{\Pi}_{kl}(\widetilde{\mb{D}}){\mb{C}}_{kl}^\dagger, \label{equ_eta} \\
	\widetilde{\eta}_{kl}({\mb{D}}) &= \mbb{E}[\widetilde{\mb{G}}_{kl}{\mb{D}} \widetilde{\mb{G}}_{kl}^\dagger] = \frac{1}{L_k}\mb{R}_{kl} \widetilde{\mb{\Pi}}_{kl}({\mb{D}}) \mb{R}_{kl}^\dagger, \label{equ_tildeeta}
\end{align}
where $\widetilde{\mb{D}}\in \mbb{C}^{R\times R}$ and $\mb{D}\in \mbb{C}^{L_k\times L_k}$ are arbitrary Hermitian matrices. The diagonal entries in the diagonal matrix $\mb{\Pi}_{kl}(\widetilde{\mb{D}})\in\mbb{C}^{L_k\times L_k}$ and the diagonal matrix $\widetilde{\mb{\Pi}}_{kl}({\mb{D}})\in\mbb{C}^{R\times R}$ are $\left[\mb{\Pi}_{kl}(\widetilde{\mb{D}})\right]_{ii} = \sum_{j = 1}^{R} \left([\mb{N}_{kl}]_{ji}\right)^2 \left[\mb{R}_{kl}^\dagger \widetilde{\mb{D}} \mb{R}_{kl}\right]_{jj}$ and $\left[\widetilde{\mb{\Pi}}_{kl}({\mb{D}})\right]_{ii} = \sum_{j=1}^{L_k} \left([\mb{N}_{kl}]_{ij}\right)^2 \left[ {\mb{C}}_{kl}^\dagger {\mb{D}} {\mb{C}}_{kl} \right]_{jj}$, respectively.

Similarly, for $1\leq n\leq N$, $1\leq l\leq L$ and  $1\leq k\leq K$ the two parameterized one-sided correlation functions of the matrices $\widetilde{\mb{F}}_{0,nl}$ and $\widetilde{\mb{F}}_{nk}$ are respectively given by
\begin{align}
	\zeta_{0,nl}(\mb{Z}_0) &= \mbb{E}[\widetilde{\mb{F}}_{0,nl}^\dagger\mb{Z}_0\widetilde{\mb{F}}_{0,nl}] = \frac{1}{T} \mb{P}_{0,nl}\mb{\Sigma}_{0,nl}(\mb{Z}_0)\mb{P}_{0,nl}^\dagger, \notag\\
	\widetilde{\zeta}_{0,nl}(\widetilde{\mb{Z}}_{0n}) &= \mbb{E}[\widetilde{\mb{F}}_{0,nl}\widetilde{\mb{Z}}_{0n}\widetilde{\mb{F}}_{0,nl}^\dagger] = \frac{1}{T}\mb{T}_{0,nl} \widetilde{\mb{\Sigma}}_{0,nl}(\widetilde{\mb{Z}}_{0n}) \mb{T}_{0,nl}^\dagger, \notag\\
	\zeta_{nk}(\mb{Z}) &= \mbb{E}[\widetilde{\mb{F}}_{nk}^\dagger\mb{Z}\widetilde{\mb{F}}_{nk}] = \frac{1}{T} \mb{P}_{nk}\mb{\Sigma}_{nk}(\mb{Z})\mb{P}_{nk}^\dagger, \notag\\
	\widetilde{\zeta}_{nk}(\widetilde{\mb{Z}}_n) &= \mbb{E}[\widetilde{\mb{F}}_{nk}\widetilde{\mb{Z}}_n\widetilde{\mb{F}}_{nk}^\dagger] = \frac{1}{T}\mb{T}_{nk} \widetilde{\mb{\Sigma}}_{nk}(\widetilde{\mb{Z}}_n) \mb{T}_{nk}^\dagger,\notag
\end{align}
where $\mb{Z}_0\in\mbb{C}^{R\times R}$, $\widetilde{\mb{Z}}_{0n}\in\mbb{C}^{T\times T}$, $\mb{Z}\in\mbb{C}^{L_k\times L_k}$ and $\widetilde{\mb{Z}}_n\in\mbb{C}^{T\times T}$ are arbitrary Hermitian matrices. The diagonal matrices $\mb{\Sigma}_{0,nl}(\mb{Z})\in\mbb{C}^{T\times T}$, $\mb{\Sigma}_{nk}(\mb{Z})\in\mbb{C}^{T\times T}$,  $\widetilde{\mb{\Sigma}}_{nk}(\widetilde{\mb{Z}}_n)\in\mbb{C}^{L_k\times L_k}$ and $\widetilde{\mb{\Sigma}}_{0,nl}(\widetilde{\mb{Z}}_{0n})\in\mbb{C}^{R\times R}$ respectively contain the diagonal entries $\left[\mb{\Sigma}_{0,nl}(\mb{Z}_0)\right]_{i,i} = \sum_{j = 1}^{L_k} \left([\mb{M}_{0,nl}]_{j,i}\right)^2 \left[\mb{T}_{0,nl}^\dagger \mb{Z}_0 \mb{T}_{0,nl}\right]_{j,j}$, $\left[\widetilde{\mb{\Sigma}}_{0,nl}(\widetilde{\mb{Z}}_{0n})\right]_{i,i} = \sum_{j=1}^{T} \left([\mb{M}_{0,nl}]_{i,j}\right)^2 \left[ \mb{P}_{0,nl}^\dagger \widetilde{\mb{Z}}_{0n} \mb{P}_{0,nl} \right]_{j,j}$, $\left[\mb{\Sigma}_{nk}(\mb{Z})\right]_{i,i} = \sum_{j = 1}^{L_k} \left([\mb{M}_{nk}]_{j,i}\right)^2 \left[\mb{T}_{nk}^\dagger \mb{Z} \mb{T}_{nk}\right]_{j,j}$ and $\left[\widetilde{\mb{\Sigma}}_{nk}(\widetilde{\mb{Z}}_n)\right]_{i,i} = \sum_{j=1}^{T} \left([\mb{M}_{nk}]_{i,j}\right)^2 \left[ \mb{P}_{nk}^\dagger \widetilde{\mb{Z}}_n \mb{P}_{nk} \right]_{j,j}$.

Further, the parameterized one-sided correlation matrices of $\mb{\wt F}_{0l}$ and $\mb{\wt F}_k$ are defined as 
\begin{align}
	\zeta_{0l}(\mb{Z}_0) &= \!\mbb{E}\{\widetilde{\mb{F}}_{0l}^\dagger\mb{\mb{Z}_0}\widetilde{\mb{F}}_{0l}\}\notag\\ &=\text{blkdiag}\left(\zeta_{0,1l}(\mb{\mb{Z}_0}),\zeta_{0,2l}(\mb{\mb{Z}_0}),\cdots,\zeta_{0,Nl}(\mb{\mb{Z}_0})\right), \label{equ_zeta}	\\
	\widetilde{\zeta}_{0l}({\mb{\wt Z}_0}) &= \mbb{E}\{\widetilde{\mb{F}}_{0l}{\mb{\wt Z}_0}\widetilde{\mb{F}}_{0l}^\dagger\} = \sum\limits_{n = 1}^N 	\widetilde{\zeta}_{0,nl}(\mb{\wt Z}_{0n}) ,\\
	\zeta_{k}(\mb{Z}) &= \!\mbb{E}\{\widetilde{\mb{F}}_{k}^\dagger\mb{Z}\widetilde{\mb{F}}_{k}\} \notag\\ 
	&=\text{blkdiag}\left(\zeta_{1k}(\mb{Z}),\zeta_{2k}(\mb{Z}),\cdots,\zeta_{Nk}(\mb{Z})\right), 	\\
	\widetilde{\zeta}_{k}(\widetilde{\mb{Z}}) &= \mbb{E}\{\widetilde{\mb{F}}_{k}\widetilde{\mb{Z}}\widetilde{\mb{F}}_{k}^\dagger\} = \sum\limits_{n = 1}^N 	\widetilde{\zeta}_{nk}(\widetilde{\mb{Z}}_n) \label{equ_tildezeta}, 
\end{align}
where $\mb{\wt Z}_0=\operatorname{blkdiag}(\mb{\wt Z}_1,\dots,\mb{\wt Z}_N)$ and $\widetilde{\mb{Z}}=\operatorname{blkdiag}(\widetilde{\mb{Z}}_1,\dots,\widetilde{\mb{Z}}_N)$.

As proved in~\cite{zheng2023mutual}, $\mb{\wt L}_{\mb{B}_l}$ is an operator-valued semicircular variable and is free from the deterministic matrix $\mb{\ob L}_{\mb{B}_l}$ over $\mc{D}$. Thus, the operator-valued Cauchy transform $\mc{G}_{\mb{L}_{\mb{B}_l}}^{\mc{D}}(\mb{\Lambda}(z))$ can be obtained by using the subordination formula, which is given as follows
\begin{equation} \label{equ_subordination}
	\begin{aligned}
		\mc{G}_{\mb{L}_{\mb{B}_l}}^{\mc{D}}&(\mb{\Lambda}(z)) = \mc{G}_{\ob{\mb{L}}_{\mb{B}_l}}^{\mc{D}}\left(\mb{\Lambda}(z) - \mc{R}_{\widetilde{\mb{L}}_{\mb{B}_l}}^{\mc{D}}\left(\mc{G}_{\mb{L}_{\mb{B}_l}}^{\mc{D}}(\mb{\Lambda}_n(z))\right)\right) \\
		&= \mbb{E}_{\mc{D}}\Big\{\left(\mb{\Lambda}(z) - \mc{R}_{\widetilde{\mb{L}}_{\mb{B}_l}}^{\mc{D}}\left(\mc{G}_{\mb{L}_{\mb{B}_l}}^{\mc{D}}(\mb{\Lambda}_n(z))\right) - \ob{\mb{L}}_{\mb{B}_l}\right)^{-1}\Big\},
	\end{aligned}
\end{equation}
where $\mc{R}_{\widetilde{\mb{L}}_{\mb{B}_l}}^{\mc{D}}$ is the $R$-transform of $\mb{\wt L}_{\mb{B}_l}$. Since $\mb{\wt L}_{\mb{B}_l}$ is an operator-valued semicircular variable over $\mc D$, we have
\begin{equation} \label{equ_Rtrans}
	\begin{aligned}
		&\mc{R}_{\widetilde{\mb{L}}_{\mb{B}_l}}^{\mc{D}}\left(\mc{G}_{\mb{L}_{\mb{B}_l}}^{\mc{D}}(\mb{\Lambda}(z))\right) = \mbb{E}_{\mc D}\left[\mb{\wt L} \mc{G}_{\mb{L}_{\mb{B}_l}}^{\mc{D}}(\mb{\Lambda}(z)) \mb{\wt L}\right]=\\
		&\operatorname{blkdiag}
		\left\{
		\begin{array} {c}
			\mb{\wh W}^\dagger\left(\sum_{k=1}^{K}\zeta_{k}(\mc{G}_{\mb{X}_{4k}}(z))+\zeta_{0l}(\mc{G}_{\mb{X}_{40}}(z)) \right) \mb{\wh W} \\
			\mb{\wh \Theta}^\dagger{\bm \eta}_{l}(\mc{G}_{\mb{X}_3}(z))\mb{\wh \Theta} \\
			\sum_{k=1}^{K}\widetilde{\eta}_{kl}\left(\mb{\Theta}_k\mc{G}_{\mb{X}_{2k}}(z)\mb{\Theta}_k^\dagger\right) \\
			\bm{\wt \zeta}_l(\mc{G}_{\mb{X}_1}(z)) \\
		\end{array}
		\right\},
	\end{aligned} 
\end{equation}
the one-sided correlation matrices are
\begin{align}
	{\bm \eta}_{l}((\mc{G}_{\mb{X}_3}(z)) &=\operatorname{blkdiag}\left\{\mb{0}_R,\eta_{1l}(\mc{G}_{\mb{X}_3}(z)),\dots,\eta_{Kl}(\mc{G}_{\mb{X}_3}(z))\right\}, \notag \\
	\bm{\wt \zeta}_l(\mc{G}_{\mb{X}_1}(z)) &=\operatorname{blkdiag}\Big\{\wt \zeta_{0l}({\mb{\wh W}\mc{G}_{\mb{X}_1}(z)\mb{\wh W}}^\dagger),\notag\\
	&\wt \zeta_{1}({\mb{\wh W}\mc{G}_{\mb{X}_1}(z)\mb{\wh W}}^\dagger),\dots,\wt\zeta_{K}({\mb{\wh W}\mc{G}_{\mb{X}_1}(z)\mb{\wh W}}^\dagger)\Big\}. \notag
\end{align}

	By substituting (\ref{equ_lambdaz}), (\ref{equ_barL}) and (\ref{equ_Rtrans}) into the subordination formula~(\ref{equ_subordination}), we have
	\begin{equation} \label{equ_GLB_E}
		\begin{aligned}
			\mc{G}_{\mb{L}_{\mb{B}_l}}^{\mc{D}}(\mb{\Lambda}_n(z)) &= 
			\operatorname{blkdiag}\left\{\mc{G}_{\mb{X}_1}(z),\mc{G}_{\mb{X}_2}(z),\mc{G}_{\mb{X}_3}(z),\mc{G}_{\mb{X}_4}(z)\right\}\\
			=\mbb{E}_{\mc D}&\left[
			\begin{array} {cccc}
				\bm{\Upsilon}(z)& \mb{0} & \mb{0} &-\mb{\wh W}^\dagger \mb{\ob F}_l^\dagger \\
				\mb{0} &\bm{\Gamma}(z)&-\mb{\wh \Theta}^\dagger\mb{\ob G}_l^{\dagger} & \mb{I}_{L_{AR}} \\
				\mb{0} &-\mb{\ob G}_l\mb{\wh \Theta}&\bm{\wt \Gamma}(z)& \mb{0} \\
				-\mb{\ob F}_l\mb{\wh W} & \mb{I}_{L_{AR}}& \mb{0} &\bm{\wt \Upsilon}(z) \\
			\end{array}
			\right]^{-1},
		\end{aligned}
	\end{equation}
	where the expressions of $\bm{\Upsilon}(z)$, $\bm{\Gamma}(z)$, $\bm{\wt \Gamma}(z)$ and $\bm{\wt \Upsilon}(z)$ are given in Proposition~\ref{prop_Cauchy}. By applying the matrix inversion lemma to (\ref{equ_GLB_E}), the expressions of $\mc{G}_{\mb{X}_1}(z)$, $\mc{G}_{\mb{X}_2}(z)$, $\mc{G}_{\mb{X}_3}(z)$ and $\mc{G}_{\mb{X}_4}(z)$ can be then obtained.
	
	\textbf{\textit{Step}~4: Calculate the Cauchy transform}: With the equivalence relation (\ref{equ_eq_G}), the Cauchy transform $\mc{G}_{\mb{\Xi},\mb{B}_{l}}(z)$ can be then obtained.

\section{Derivative of $\mc{\wt B}$ with Respect to $\mb{\Theta}_{k,l_k}^\star$} \label{appen_derivative_II}
Based on Proposition~\ref{prop_Cauchy}, we omit the notation $(z)$ and the subscript $l$ for convenience and define $\mb{I}^k_{l_k}$ as $\frac{d\bm{\wh \Theta}}{d\mb{\Theta}_{k,l_k}}$ and $\mb{I}_{l_k}$ as $\frac{d\bm{\Theta}_k}{d\mb{\Theta}_{k,l_k}}$. The derivative of $\mc{\wt B}$ with respect to $\mb{\Theta}_{k,l_k}^\star$ can be expressed as
\begin{equation} 
	\mc{\wt B}^\prime= -\mc{\wt B}\left(\bm{\Upsilon}^\prime + \mb{\wh W}^\dagger \mb{\ob F}^\dagger\mb{\Omega}^{-1}\mb{\Omega}^\prime\mb{\Omega}^{-1}\mb{\ob F}\mb{\wh W}\right)\mc{\wt B},
\end{equation}
where the matrix $\mb{\Omega}^\prime$ denotes the derivative of $\mb{\Omega}$.
\begin{figure*}[b] 
	\vspace{-0.4cm}
	\hrulefill  
	\begin{align}
		\mb{\Omega}^\prime&=\bm{\wt \Upsilon}^\prime+\left(\bm{\Gamma}-\mb{\wh \Theta}^\dagger\mb{\ob G}^{\dagger}\bm{\wt \Gamma}^{-1}\mb{\ob G}\mb{\wh \Theta}\right)^{-1}\left(\bm{\Gamma}^\prime-\mb{I}^k_{l_k}\mb{\ob G}^{\dagger}\bm{\wt \Gamma}^{-1}\mb{\ob G}\mb{\wh \Theta}+\mb{\wh \Theta}^\dagger\mb{\ob G}^{\dagger}\bm{\wt \Gamma}^{-1}\bm{\wt \Gamma}^\prime\bm{\wt \Gamma}^{-1}\mb{\ob G}\mb{\wh \Theta}\right)\left(\bm{\Gamma}-\mb{\wh \Theta}^\dagger\mb{\ob G}_l^{\dagger}\bm{\wt \Gamma}^{-1}\mb{\ob G}\mb{\wh \Theta}\right)^{-1}. \label{equ_varPi}
	\end{align}
\end{figure*}
The matrix-valued functions $\bm{\Upsilon}^\prime$, $\bm{\Gamma}^\prime$, $\bm{\wt \Gamma}^\prime$ and $\bm{\wt \Upsilon}^\prime$ satisfy the following fixed-point equations
\begin{align}
	\bm{\Upsilon}^\prime &=-\mb{\wh W}^\dagger\left(\sum_{k=1}^{K}\zeta_{k}(\mc{G}_{\mb{X}_{4k}}^\prime)+\zeta_{0l}(\mc{G}_{\mb{X}_{40}}^\prime) \right) \mb{\wh W}, \notag\\
	\bm{\Gamma}^\prime &= -\mb{I}^k_{l_k}\operatorname{blkdiag}\left\{\mb{0}_R,\eta_{1l}(\mc{G}_{\mb{X}_3}),\dots,\eta_{Kl}(\mc{G}_{\mb{X}_3})\right\}\mb{\wh \Theta}\notag \\
	&\qquad-\mb{\wh \Theta}^\dagger\operatorname{blkdiag}\left\{\mb{0}_R,\eta_{1l}(\mc{G}_{\mb{X}_3}^\prime),\dots,\eta_{Kl}(\mc{G}_{\mb{X}_3}^\prime)\right\}\mb{\wh \Theta}, \notag\\
	\bm{\wt \Gamma}^\prime &= -\widetilde{\eta}_{kl}\left(\mb{\Theta}_k\mc{G}_{\mb{X}_{2k}}\mb{I}_{l_k}\right)-\sum_{n=1}^{K}\widetilde{\eta}_{nl}\left(\mb{\Theta}_n\mc{G}^\prime_{\mb{X}_{2n}}\mb{\Theta}_n^\dagger\right), \notag\\
	\bm{\wt \Upsilon}^\prime &= -\operatorname{blkdiag}\Big\{\zeta_{0l}({\mb{\wh W}\mc{G}_{\mb{X}_1}^\prime\mb{\wh W}}^\dagger),\notag \\
	& \qquad\qquad\zeta_{1}({\mb{\wh W}\mc{G}_{\mb{X}_1}^\prime\mb{\wh W}}^\dagger),\dots,\zeta_{K}({\mb{\wh W}\mc{G}_{\mb{X}_1}^\prime\mb{\wh W}}^\dagger)\Big\}.\notag
\end{align}
where the equations $\mc{G}_{\mb{X}_1}^\prime$, $\mc{G}_{\mb{X}_{2k}}^\prime$, $\mc{G}_{\mb{X}_3}^\prime$ and $\mc{G}_{\mb{X}_{4k}}^\prime$ denote the derivatives of $\mc{G}_{\mb{X}_1}$, $\mc{G}_{\mb{X}_{2k}}$, $\mc{G}_{\mb{X}_3}$ and $\mc{G}_{\mb{X}_{4k}}$ with respect to $\mb{\Theta}_{k,l_k}^\star$, which can be directly obtained based on the matrix derivation and are omitted here for lack of space.

\end{appendices}

\bibliographystyle{IEEEtran}
\bibliography{RIS_WMMSE}

\end{document}